\theoremstyle{plain}
\newtheorem{theorem}{Theorem}[section]
\newtheorem{lemma}[theorem]{Lemma}
\theoremstyle{definition}
\newtheorem{definition}[theorem]{Definition}
\theoremstyle{remark}
\DeclareMathOperator*{\argmin}{argmin}
\newcommand{\subf}[2]{%
  {\small\begin{tabular}[tb]{@{}c@{}}
  #1\\#2
  \end{tabular}}%
}
\newcommand{\defeq}{\vcentcolon=}
\icmltitlerunning{Theoretical and Empirical Analysis of Adaptive Entry Point Selection for Graph-based ANNS}
\begin{document}

\twocolumn[
\icmltitle{Theoretical and Empirical Analysis of Adaptive Entry Point Selection for Graph-based Approximate Nearest Neighbor Search}



\icmlsetsymbol{equal}{*}

\begin{icmlauthorlist}
\icmlauthor{Yutaro Oguri}{yyy}
\icmlauthor{Yusuke Matsui}{yyy}
\end{icmlauthorlist}

\icmlaffiliation{yyy}{The University of Tokyo, Tokyo, Japan}

\icmlcorrespondingauthor{Yutaro Oguri}{oguri@hal.t.u-tokyo.ac.jp}

\icmlkeywords{Approximate Nearest Neighbor Search, Graph-based Index}

\vskip 0.3in
]



\printAffiliationsAndNotice{}  

\begin{abstract}

We present a theoretical and empirical analysis of the adaptive entry point selection for graph-based approximate nearest neighbor search (ANNS). We introduce novel concepts: $b$\textit{-monotonic path} and $B$\textit{-MSNET}, which better capture an actual graph in practical algorithms than existing concepts like MSNET. We prove that adaptive entry point selection offers better performance upper bound than the fixed central entry point under more general conditions than previous work. Empirically, we validate the method's effectiveness in accuracy, speed, and memory usage across various datasets, especially in challenging scenarios with out-of-distribution data and hard instances. Our comprehensive study provides deeper insights into optimizing entry points for graph-based ANNS for real-world high-dimensional data applications.

\end{abstract}

\section{Introduction}
\label{sec:introduction}
Nearest Neighbor Search (NNS) is an algorithm that searches for the vector closest to a query vector within a set of vectors. Approximate Nearest Neighbor Search (ANNS)~\cite{arya1993approximate_ANNS} approximates NNS, achieving faster search speeds while sacrificing a small quantity of accuracy. ANNS is vital in various domains like image retrieval~\cite{lowe2004distinctive_image_features_SIFT}. ANNS becomes essential for high-dimensional vectors and large-scale data~\cite{aumuller2020ann-benchmarks}. We evaluate the performance of ANNS algorithms based on accuracy, speed, and memory usage.


Among ANNS methods, the graph-based approach offers the best trade-off between accuracy and performance, as long as the data fits in memory~\cite{cvpr23_tutorial_neural_search,Fu2017FastAN_NSG,malkov2018efficient_HNSW}. The graph-based index constructs a graph with the vectors in the database as nodes. One can start the search from an entry point and traverse the graph towards the query to find the nearest vector.

Recent research demonstrated that the entry points are significant regarding the performance~\cite{arai2021lgtm,iwasaki2018optimization_NGT,oguri2023generalSisap,ni2023diskann++}. When a query and the entry point are distant, the number of hops on the path increases, and the search speed decreases. Additionally, we may not reach the nearest neighbors of distant queries, causing a decrease in accuracy. Oguri and Matsui proposed adaptively selecting the entry point from candidates generated by k-means clustering~\cite{oguri2023generalSisap}. Such adaptive selection is empirically known to enable efficient search. In a subsequent work~\cite{ni2023diskann++} about ANNS on a memory-disk hybrid setting, Ni et al. proposed the same method and demonstrated its benefit.

However, it remains unclear when and why the adaptive entry point selection is effective. This research provides theoretical and empirical analysis to address this issue. Our contributions are as follows:

\textbf{Theoretical Aspect}
\begin{itemize}
    \item We introduced a novel concept of $b$\textit{-monotonic path} and $B$\textit{-MSNET} that generalize the existing concept of MSNET~\cite{dearholt1988monotonic_MSNET}. It better captures actual graphs in practical algorithms.
    \item We proved that the adaptively selected entry point works better than the fixed central point. Compared to proof in previous research, it is simpler and is valid under more general conditions on data and graphs.
\end{itemize}
\textbf{Empirical Aspect}
\begin{itemize}
    \item We extensively evaluated the adaptive entry point selection on datasets with various characteristics, such as out-of-distribution (OOD) settings. We showed that the method improved the speed of NSG~\cite{Fu2017FastAN_NSG} about 1.2 - 2.3 times. We also demonstrated its effectiveness in terms of accuracy and memory usage.
    \item We demonstrated that the adaptive entry point selection is effective against hard instances proposed in the recent study~\cite{indyk2023worstcase} that significantly reduce the performance of the graph-based index. We empirically examined the reasons for this effectiveness.
\end{itemize}

\section{Related Work}
\subsection{Approximate Nearest Neighbor Search (ANNS)}
Given a database consisting of $N$ $d$-dimensional vectors $\mathcal{X} = \{\bm{x}_1, \dots, \bm{x}_N\} \subset \mathbb{R}^d$ and a query vector $\bm{q}\in\mathbb{R}^d$, Nearest Neighbor Search (NNS) is an algorithm to answer the closest vector $\bm{x}^* = \argmin_{\bm{x}\in\mathcal{X}}\Vert\bm{q} - \bm{x}\Vert_2$. An extension of NNS that returns the $\text{top-}K$ closest vectors is $K-$Nearest Neighbor Search ($K-$NNS). Since it requires calculating the distance between every vector in the database and the query, it is impractical for large-scale, high-dimensional datasets as typically seen in real applications~\cite{li2019anns_highdim_dataexp_analysis,aumuller2020ann-benchmarks}. Therefore, Approximate Nearest Neighbor Search (ANNS) uses an index to speed up the search significantly. We can divide algorithms for into four types: tree-based~\cite{muja2014scalableFLANN,silpa2008optimised_KDTree}, quantization-based~\cite{ge2013optimized,Jegou2011ProductQuantization,huijben2024residual_neural_quantization}, hash-based~\cite{gionis1999similarityLSH,andoni2015practical_and_optimal_LSH_angular}, and graph-based~\cite{wang2021comprehensive_survey_graph_2021,malkov2018efficient_HNSW,Fu2017FastAN_NSG,jayaram2019diskann}. Each has a performance tradeoff.


Applications of ANNS include image retrieval~\cite{lowe2004distinctive_image_features_SIFT}, recommendation systems~\cite{suchal2010full_recommendation,chen2022approximate_large_recommendation}, and knowledge augmentation in Large Language Models~\cite{asai2023retrieval_based_LLM}.

\subsection{Graph-based Index}

\begin{algorithm}[tb]
   \caption{Search on Graph-based Index}
   \label{alg:graph_search}
\begin{algorithmic}
    \STATE {\bfseries Input:} Index $G(\mathcal{V}, \mathcal{E})$, Entry Point $v_s\in\mathcal{V}$, Query $\bm{q}\in\mathbb{R}^d$, Length of Search Queue $L\in\mathbb{Z}$
    \STATE {\bfseries Output:} Nearest Node to $\bm{q}$ in $C$
    \STATE Initialize Candidate Queue $C = \{v_s\}$
    \STATE Initialize Visited Node Set $T = \{v_s\}$
    \REPEAT
        \STATE $u = C$.popNearestNode()
        \FOR{$v \in G$.getNeighbors(u)}
            \IF{$v \notin T$}
            \STATE $C$.enqueue($v$)
            \STATE $T = T \cup \{v\}$
            \IF{$\vert C \vert > L$}
                \STATE $C$.popFarthestNode()
            \ENDIF
            \ENDIF
        \ENDFOR
    \UNTIL{$C$ not updated}
\end{algorithmic}
\end{algorithm}

We empirically know that the graph-based index offers the best trade-off between accuracy and speed for million-scale problems where the entire database fits in memory~\cite{cvpr23_tutorial_neural_search,Fu2017FastAN_NSG,malkov2018efficient_HNSW}. It constructs a graph $G(\mathcal{V}, \mathcal{E})$ by corresponding nodes $v_i\in\mathcal{V}$ to vectors in the database $\bm{x}_i\in\mathcal{X}$, and search for the nearest neighbors on the graph from an entry point (\cref{alg:graph_search}).

The graph index approximates a base graph that has specific characteristics such as Delaunay Graph, Relative Neighborhood Graph~\cite{jaromczyk1992relative_RNG}, and Monotonic Search Network (MSNET)~\cite{dearholt1988monotonic_MSNET}. NSG~\cite{Fu2017FastAN_NSG} inherits the characteristics of MSNET approximately, but note that it requires a strong assumption that a query is included in a database~\cite{Fu2017FastAN_NSG,prokhorenkova2020graph_practice_theory}. It refines a constructed KNN graph~\cite{paredes2005using_knn_graph} by Efanna~\cite{fu2016efanna} or NNDescent~\cite{dong2011efficient_Kgraph_NNDescent}. DiskANN~\cite{jayaram2019diskann} targets the memory-disk hybrid settings. HNSW~\cite{malkov2018efficient_HNSW} constructs hierarchical graphs and is one of the SoTA implementations. Note that the adaptive entry point selection does not target such hierarchical indexes. In practical application, HNSW, NSG, and DiskANN are the most widely used methods~\cite{Fu2017FastAN_NSG,zhang2022uni_bing_search}. 


\subsection{Characteristics of dataset in ANNS}
We use a variety of datasets for evaluating ANNS algorithms. Traditionally, datasets composed of feature descriptors like SIFT~\cite{lowe2004distinctive_image_features_SIFT} and GIST~\cite{oliva2001modeling_GIST_original} served as standard benchmarks. Recently, with the advancement of deep learning, it is expected to evaluate the algorithm on neural feature vectors~\cite{Simhadri2022NeuripsComp}. It becomes common to conduct cross-modal searches like Text-to-Image retrieval, using image embeddings as the database and text embeddings as the query. These datasets exhibit differences in the statistical properties of queries and databases due to modality. It is a class of Out-Of-Distribution (OOD) query setting. A previous work~\cite{jaiswal2022ood_diskann} demonstrated that their performance significantly decreases compared to ordinary settings.

Recent research has made progress in theoretically analyzing graph-based indexes~\cite{prokhorenkova2020graph_practice_theory}. A previous work~\cite{indyk2023worstcase} analyzes the worst-case performance and provides hard instances where graph-based indexes achieve near-worst performance.


\section{Preliminary}

\begin{table}[tb]
\caption{Mathematical Notations}
\label{table:notations}
\vskip 0.15in
\begin{center}
\begin{small}
\begin{tabular}{@{}ll@{}}
\toprule
Notations & Descriptions \\
\midrule
$d$ & Dimensionality of vectors. \\
$N$ & Size of a database. \\
$\mathcal{X} \subset \mathbb{R}^d$ & A database $\mathcal{X} = \{\bm{x}_1, \dots, \bm{x}_N\} \subset \mathbb{R}^d$. \\
$\Vert\cdot\Vert_2$ & Euclidean norm of a vector. \\
$G(\mathcal{V}, \mathcal{E})$ & A graph $G$ with vertices $\mathcal{V}$ and edges $\mathcal{E}$. \\
$\phi: \mathcal{X} \to \mathcal{V}$ & A one-to-one mapping from $\bm{x}_i\in\mathcal{X}$ to $v_i \in \mathcal{V}$. \\
$\bm{\phi}^{-1}: \mathcal{V}\to\mathcal{X}$ & An inverse mapping of $\phi$. \\
$\mathcal{P}(v_s, v_t) \subseteq \mathcal{V}$ & A path on a graph from $v_s$ to $v_t$. \\
$\textbf{NN}(\bm{x}, \mathcal{S})$ & $\argmin_{\bm{y}\in\mathcal{S}}\Vert\bm{x} - \bm{y}\Vert_2$. \\
$\textbf{GT}(\bm{q})$ & $\textbf{NN}(\bm{q}, \mathcal{X})$. A ground truth vector for $\bm{q}$. \\
\bottomrule
\end{tabular}
\end{small}
\end{center}
\vskip -0.1in
\end{table}

Let us first introduce the preliminary knowledge. \cref{table:notations} lists mathematical notations used in the paper.

\subsection{Voronoi Partition} \label{sec:voronoi_partition}
This section introduces the Voronoi partition used in our theoretical analysis and empirical explanations. Let $\mathcal{U} \subset \mathbb{R}^d$ be a finite subset of Euclidean space $\mathbb{R}^d$, and $K$ be the number of cells in the Voronoi partition. The Voronoi partition depends only on a set of representative points called `sites' $\mathcal{D} = \{\bm{d}_1, \dots, \bm{d}_K\}\subset\mathbb{R}^d$. For each site $\bm{d}_j \in \mathcal{D}$, we define a Voronoi cell $\mathcal{U}_j$ as follows:
\begin{align}
    \mathcal{U}_j = \{\bm{y} \in \mathcal{U} \mid \argmin_{\bm{d}\in\mathcal{D}}\Vert\bm{y}-\bm{d}\Vert_2 = \bm{d}_j\}.
\end{align}
Then, we can divide $\mathcal{U}$ into $K$ cells $\{\mathcal{U}_1, \dots, \mathcal{U}_K\}$ and we have $\mathcal{U} = \bigcup_{j=1}^{K}\mathcal{U}_j$. We assume the cells are disjoint each other for simplicity.

Let $\mathcal{U}(\mathcal{X})$ be a finite region that include $\mathcal{X}$ and all queries $\bm{q}\in\mathbb{R}^d$. We define Voronoi partition on $\mathcal{U}(\mathcal{X})$, and denote each Voronoi cells by $\mathcal{U}_j \subset \mathcal{U}(\mathcal{X})$.

\subsection{The fixed central entry point}
NSG~\cite{Fu2017FastAN_NSG} and DiskANN~\cite{jayaram2019diskann} start the search from the fixed central entry point. We denote the central point $\bm{d}_0$ of the database by:
\begin{align}
    \bm{d}_0 &= \textbf{NN}\left(\frac{1}{\vert\mathcal{X}\vert}\sum_{\bm{x}\in\mathcal{X}} \bm{x}, \mathcal{X}\right).
\end{align}

\subsection{Recap of Adaptive Entry Point Selection} \label{sec:recap_ep_selection}
We review the adaptive entry point selection using k-means clustering~\cite{oguri2023generalSisap,ni2023diskann++}.

\textbf{Generating entry point candidates}\hspace{0.1in}We obtain a set of entry point candidates by dividing the entire database $\mathcal{X}$ through clustering and computing the nearest neighbor vector to each cluster center. We perform k-means clustering~\cite{Lloyd1982LeastSQkmeanslloyd} on the database $\mathcal{X} $ to obtain $ K $ clusters and their cluster centers $ \mathcal{C} = \{\bm{c}_1, \dots, \bm{c}_K\} \subset \mathbb{R}^d $. Next, for each cluster center $ \bm{c}_i \in \mathcal{C} $, we compute the nearest neighbor $ \bm{d}_i = \argmin_{\bm{x}\in\mathcal{X}}\Vert\bm{c}_i - \bm{x}\Vert_2 $. Let $ \mathcal{D} = \{\bm{d}_1, \dots, \bm{d}_K\} $ be the set of candidate entry points. The reason to construct $\mathcal{D}$ is that we cannot create a node for $\bm{c}\in\mathcal{C}$ because $\bm{c}\notin\mathcal{X}$.

The time complexity of the phase to generate candidates is $\mathcal{O}(N_{iter} K N d)$ under the fixed number of iteration $N_{iter}$ in k-means. Note that we used highly optimized and significantly faster implementations of k-means like \texttt{Faiss}~\cite{johnson2019billion_faiss,douze2024faiss_library}.

The actual data to be stored is only the candidate set $\mathcal{D}$, and the final space complexity is $ \mathcal{O}(Kd) $. Here, the number of clusters $ K $ is at most about $1000$, which is sufficiently small compared to the memory consumption of the index itself.

\textbf{Select an entry point for a query}\hspace{0.1in}Given a query $\bm{q}\in\mathbb{R}^d$, we first select the entry point $\bm{d}_j = \argmin_{\bm{d}\in\mathcal{D}}\Vert\bm{q} - \bm{d}\Vert_2$. This part involves Brute-Force searching, so the time complexity is $\mathcal{O}(Kd)$, which becomes the search overhead. Subsequently, the search in the graph index continues from the selected entry point $\bm{d}_j$. Increasing the hyperparameter $K$ improves the search speed on the graph index, but the overhead of selecting entry points also increases. That is a trade-off in the whole performance.

\section{Theoretical Analysis} \label{sec:theory}
We introduce two new concepts $b$\textit{-monotonic path} and $B$\textit{-MSNET} to represent a graph-based index in a more general perspective than existing MSNET~\cite{dearholt1988monotonic_MSNET}. In addition, we prove that the adaptively selected entry point offers a better upper bound of performance than the fixed central entry point, assuming that a graph-based index belongs to a class of $B$-MSNET.

\subsection{$b$-monotonic path \& $B$-MSNET}
We introduce two novel concepts, $b$\textit{-monotonic path} and $B$\textit{-MSNET}. They are a generalization of monotonic path monotonic search network (MSNET) that appears in the theoretical background of NSG~\cite{Fu2017FastAN_NSG} and theoretical analysis of graph-based index~\cite{prokhorenkova2020graph_practice_theory}.

Let $\mathcal{P}(v_s, v_t) = \{v_1, \dots, v_{l+1}\}$ $(v_1=v_s~\text{and}~v_{l+1}=v_t)$ be a $l$-hop path from $v_s$ to $v_t$. For each $i \in \{1, \dots, l+1\}$, we denote that the corresponding vector for node $v_i$ as $\bm{x}_i = \bm{\phi}^{-1}(v_i)$. We define
\begin{align}
    r_i = \Vert \bm{x}_{i} - \bm{x}_t \Vert_2 - \Vert \bm{x}_{i+1} - \bm{x}_t \Vert_2.
\end{align}
$r_i$ represents how much the distance to the end $v_t$ changes due to a one-hop from node $v_i$ to $v_{i+1}$. We denote that the collection of $r_i$ defined on the path $\mathcal{P}(v_s, v_t)$ as $ \mathcal{R}(\mathcal{P}) = \{r_1, \dots, r_l\}$. Then, we split $\mathcal{R}(\mathcal{P})$ into two subsets:
\begin{align}
    \mathcal{R}(\mathcal{P}) &= \mathcal{R}(\mathcal{P})_{+} \cup \mathcal{R}(\mathcal{P})_{-} \label{eq:R_path} \\
    \mathcal{R}(\mathcal{P})_{+} &= \{r\in\mathcal{R}(\mathcal{P}) \mid r\geq0\} \label{eq:R_path_plus} \\
    \mathcal{R}(\mathcal{P})_{-} &= \{r\in\mathcal{R}(\mathcal{P}) \mid r < 0\}. \label{eq:R_path_minus}
\end{align}

\begin{figure}[tb]
\vskip 0.2in
    \centering
    \includegraphics[clip, width=0.8\columnwidth]{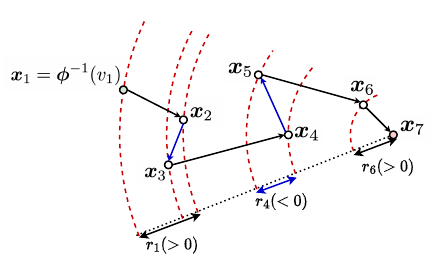}
    \caption{An illustrated example of $b$-monotonic path $(b=2)$ $\mathcal{P}(v_1, v_7) = \{v_1, \dots, v_7\}$ on a graph $G(\mathcal{V}, \mathcal{E})$. Each node $v_i$ corresponds to a vector $\bm{x}_i\in\mathcal{X}$. Note that $r_i = \Vert \bm{x}_{i} - \bm{x}_7 \Vert_2 - \Vert \bm{x}_{i+1} - \bm{x}_7 \Vert_2$ for $i\in\{1, \dots, 6\}$. $r_2, r_4$ are negative and other all $r_i$ are positive. Thus, this path is a $2-$monotonic path. An arrow between two nodes colored with blue represents a backward hop with a negative $r_i$.}
    \label{fig:b-monotonic-example}
\vskip -0.2in
\end{figure}

\begin{figure*}[tb]
\vskip 0.2in
\centering
    \subfigure[$\bm{q}$ and $\textbf{GT}(\bm{q})$ in the same cell]{%
        \includegraphics[clip, width=0.83\columnwidth]{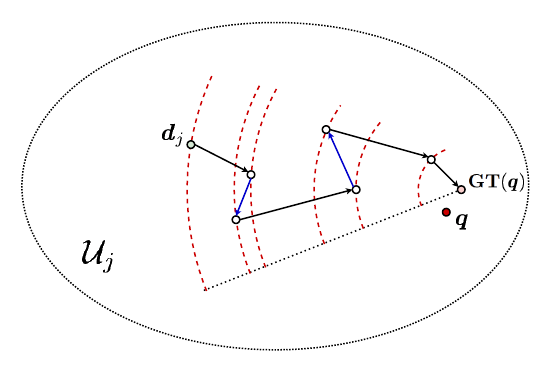}}%
    \hspace{1.4cm}
    \subfigure[$\bm{q}$ and $\textbf{GT}(\bm{q})$ in the different cells]{%
        \includegraphics[clip, width=0.83\columnwidth]{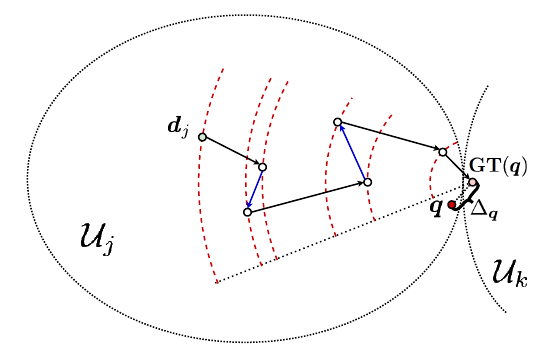}}%
    \caption{A $2-$monotonic path starting from $\bm{d}_j$ to $\textbf{GT}(\bm{q})$. (a) shows the case (i) where $\bm{q}, \textbf{GT}(\bm{q})\in\mathcal{U}_j$, and (b) shows the case (ii) where $\textbf{GT}(\bm{q})\in\mathcal{U}_k$ $(j\neq k)$.}
    \label{fig:b-monotonic-example-cluster}
\vskip -0.2in
\end{figure*}

Then, we define $b$\textit{-monotonic path}.
\begin{definition}[$b$\textit{-monotonic path}]
\label{def:b-monotonic-path}
Let $G(\mathcal{V}, \mathcal{E})$ be a graph-based index constructed on a database $\mathcal{X}$. Let $\mathcal{P}(v_s, v_t) = \{v_1, \dots, v_{l+1}\}$ $(v_1=v_s, v_{l+1}=v_t)$ be a $l$-hop path on $G(\mathcal{V}, \mathcal{E})$ from $v_s$ to $v_t$. The path $\mathcal{P}$ is a $b$-monotonic path, iff $\vert\mathcal{R}(\mathcal{P})_{-}\vert = b$ holds.
\end{definition}
$b$-monotonic path is a generalized concept of monotonic path~\cite{Fu2017FastAN_NSG,ni2023diskann++}. It includes $b$ \textit{backward} steps out of all $l$ steps that go away from the goal, and the rest of $l-b$ steps are \textit{forward} steps that proceed to the goal. \cref{fig:b-monotonic-example} shows an example of $b$-monotonic path $(b=2)$ with $6$-hops.  \cref{fig:b-monotonic-example} shows the radii of concentric spheres centered around $\bm{\phi}^{-1}(v_t)$ colored with red dotted arcs. Considering them, \cref{lemma:b-monotonic-sum-of-r} intuitively follows.
\begin{lemma}
\label{lemma:b-monotonic-sum-of-r}
Let $\mathcal{P}(v_s, v_t)$ be a $b$-monotonic path. Let $\bm{x}_s = \bm{\phi}^{-1}(v_s)$ and $\bm{x}_t = \bm{\phi}^{-1}(v_t)$. The following formula holds:
\begin{equation}
    \Vert \bm{x}_s - \bm{x}_t \Vert_2 = \sum_{r\in\mathcal{R}(\mathcal{P})} r = \sum_{r\in\mathcal{R}_{+}}r + \sum_{r\in\mathcal{R}_{-}}r.
\end{equation}
\end{lemma}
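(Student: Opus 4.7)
The plan is to prove this by a straightforward telescoping sum argument, followed by invoking the partition of $\mathcal{R}(\mathcal{P})$ defined in equations~\eqref{eq:R_path}--\eqref{eq:R_path_minus}. The definition of $r_i$ makes each term the difference of consecutive distances to $\bm{x}_t$ along the path, so summing them will collapse almost all intermediate terms.

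First I would write out $\sum_{i=1}^{l} r_i = \sum_{i=1}^{l}\bigl(\Vert \bm{x}_i - \bm{x}_t\Vert_2 - \Vert \bm{x}_{i+1} - \bm{x}_t\Vert_2\bigr)$ and observe that this telescopes to $\Vert \bm{x}_1 - \bm{x}_t\Vert_2 - \Vert \bm{x}_{l+1} - \bm{x}_t\Vert_2$. Since the path endpoints satisfy $v_1 = v_s$ and $v_{l+1} = v_t$, we have $\bm{x}_1 = \bm{x}_s$ and $\bm{x}_{l+1} = \bm{x}_t$, and the second term vanishes because $\Vert \bm{x}_t - \bm{x}_t\Vert_2 = 0$. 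This yields $\sum_{r\in\mathcal{R}(\mathcal{P})} r = \Vert \bm{x}_s - \bm{x}_t\Vert_2$.

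For the second equality, I would simply note that $\mathcal{R}(\mathcal{P}) = \mathcal{R}(\mathcal{P})_{+} \sqcup \mathcal{R}(\mathcal{P})_{-}$ by construction (equations~\eqref{eq:R_path}--\eqref{eq:R_path_minus} give a disjoint decomposition based on the sign of $r$), so the sum decomposes accordingly. Note that the $b$-monotonicity assumption itself (i.e., $\vert \mathcal{R}(\mathcal{P})_{-}\vert = b$) is actually not used in establishing the identity: the equality holds for any path, and the $b$-monotonic label merely controls how the positive and negative parts are distributed. It may be worth remarking this in the proof so the reader sees that \cref{lemma:b-monotonic-sum-of-r} is a general telescoping fact specialized by the $b$-monotonic splitting.

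There is essentially no obstacle here; the main conceptual point is just to be careful with index boundaries ($i$ runs from $1$ to $l$, with $l$ terms for an $l$-hop path of $l+1$ nodes) and to explicitly invoke the endpoint identification $\bm{x}_{l+1} = \bm{x}_t$. The proof should be at most a few lines.
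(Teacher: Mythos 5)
Your telescoping argument is correct and is exactly the intended justification: the paper itself gives no formal proof, merely asserting that the lemma ``intuitively follows'' from the concentric-spheres picture in its figure, and your calculation is the rigorous version of that same idea. Your added remark that $b$-monotonicity is not actually needed for the identity (it holds for any path, the splitting into $\mathcal{R}_{+}$ and $\mathcal{R}_{-}$ being a sign-based disjoint decomposition) is accurate and consistent with how the lemma is used later.
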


Based on \cref{def:b-monotonic-path}, we introduce a new concept, $B$-Monotonic Search Network ($B$-MSNET).
\begin{definition}[$B$\textit{-MSNET}]
\label{def:B-MSNET}
Given a graph-based index $G(\mathcal{V}, \mathcal{E})$ constructed on a database $\mathcal{X}$, $G(\mathcal{V}, \mathcal{E})$ is a $B$-MSNET iff for any two nodes $v_s, v_t \in \mathcal{V}$, there exists an integer $b\leq B$ and a $b$-monotonic path $\mathcal{P}(v_s, v_t)$.
\end{definition}
$B$-MSNET is a generalized concept of MSNET~\cite{dearholt1988monotonic_MSNET}. When $B=0$, it is equivalent to MSNET.

\subsection{Effectiveness of Adaptive Entry Point Selection}
We introduce a core theorem that supports the effectiveness of entry point selection.
\begin{theorem}
\label{thm:upperbound_proof}
Let $G(\mathcal{V}, \mathcal{E})$ be a $B$-MSNET (a graph-based index on $\mathcal{X}$). Let $\bm{q}\in\mathcal{U}(\mathcal{X})$ be a query. Let $\mathcal{P}(v_s, v_t)$ be a $b$-monotonic path $(b \leq B)$ from $v_s$ to $v_t$, where $v_s$ is the selected entry point for query $\bm{q}$ and $v_t$ is a corresponding node of ground truth $\phi(\textbf{GT}({\bm{q}}))$. Let $\Bar{l}$ be the upper bound of the number of hops of $\mathcal{P}$.

On the other hand, we consider a path $\mathcal{P}_0$ starting from the fixed central point $\phi(\bm{d}_0)$ to $\phi(\textbf{GT}({\bm{q}}))$. Let $\Bar{l}_0$ be the upper bound of the number of hops of $\mathcal{P}_0$.

When at least one of the following two conditions is met, $\Bar{l} \leq \Bar{l}_0$ holds.
\begin{enumerate}
    \item[(i)] $\bm{q}\in\mathcal{U}_j$ $\land$ $\textbf{GT}(\bm{q})\in\mathcal{U}_j$
    \item[(ii)] $\bm{q}\in\mathcal{U}_j$ $\land$ $\textbf{GT}(\bm{q})\notin\mathcal{U}_j$ $\land$ $\Vert\bm{q} - \textbf{GT}(\bm{q})\Vert_2 \leq \Bar{R} - \Bar{R}_j$
\end{enumerate}
where
\begin{align}
    \Bar{R}_j = \max_{\bm{x},\bm{y}\in\mathcal{U}_j}\Vert\bm{x}-\bm{y}\Vert_2 \\
    \Bar{R} = \max_{\bm{x},\bm{y}\in\mathcal{U}(\mathcal{X})}\Vert\bm{x}-\bm{y}\Vert_2.
\end{align}
\end{theorem}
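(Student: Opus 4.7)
The plan is to bound $\Bar{l}$ and $\Bar{l}_0$ in terms of the starting-to-target distances $\Vert\bm{d}_j - \textbf{GT}(\bm{q})\Vert_2$ and $\Vert\bm{d}_0 - \textbf{GT}(\bm{q})\Vert_2$ via Lemma \ref{lemma:b-monotonic-sum-of-r}, and then to argue from the Voronoi geometry that both conditions (i) and (ii) cap the first distance by the same quantity $\Bar{R}$ that already caps the second.

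First I would convert the hop count of a $b$-monotonic path into a non-decreasing function of its start-to-end distance. Applying Lemma \ref{lemma:b-monotonic-sum-of-r} to an $l$-hop path and introducing a minimum forward progress $\delta > 0$ together with a maximum backward-step size $\Delta$ (natural structural parameters of an MSNET-style construction, reflecting that distinct database points are separated and that edges have bounded length), I obtain $(l - b)\delta - b\Delta \leq \Vert\bm{x}_s - \bm{x}_t\Vert_2$, hence $l \leq (\Vert\bm{x}_s - \bm{x}_t\Vert_2 + b\Delta)/\delta + b$. Using $b \leq B$ yields a non-decreasing function $f$ such that every $b$-monotonic path from $v_s$ to $v_t$ uses at most $f(\Vert\bm{x}_s - \bm{x}_t\Vert_2)$ hops, so $\Bar{l} \leq f(\Vert\bm{d}_j - \textbf{GT}(\bm{q})\Vert_2)$ and $\Bar{l}_0 \leq f(\Vert\bm{d}_0 - \textbf{GT}(\bm{q})\Vert_2)$.

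Next I would bound the two starting distances from the Voronoi geometry. For the fixed entry point, $\bm{d}_0, \textbf{GT}(\bm{q}) \in \mathcal{U}(\mathcal{X})$ gives $\Vert\bm{d}_0 - \textbf{GT}(\bm{q})\Vert_2 \leq \Bar{R}$. For the adaptive entry point in case (i), both $\bm{d}_j$ (the site of $\mathcal{U}_j$) and $\textbf{GT}(\bm{q})$ lie in $\mathcal{U}_j$, so $\Vert\bm{d}_j - \textbf{GT}(\bm{q})\Vert_2 \leq \Bar{R}_j \leq \Bar{R}$. In case (ii), the triangle inequality combined with $\Vert\bm{d}_j - \bm{q}\Vert_2 \leq \Bar{R}_j$ (since $\bm{d}_j, \bm{q} \in \mathcal{U}_j$) and the hypothesis $\Vert\bm{q} - \textbf{GT}(\bm{q})\Vert_2 \leq \Bar{R} - \Bar{R}_j$ again yields $\Vert\bm{d}_j - \textbf{GT}(\bm{q})\Vert_2 \leq \Bar{R}_j + (\Bar{R} - \Bar{R}_j) = \Bar{R}$. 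In either case, monotonicity of $f$ delivers $\Bar{l} \leq f(\Bar{R}) = \Bar{l}_0$, with strict improvement whenever $\Bar{R}_j < \Bar{R}$ in case (i).

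The main obstacle is the first step: Lemma \ref{lemma:b-monotonic-sum-of-r} records only the signed sum of per-hop progress and does not by itself upper-bound the hop count. Closing this gap requires the mild structural assumption of a uniform minimum positive step $\delta$ on the graph, or, more abstractly, simply that the hop-count bound is non-decreasing in the starting distance. Once this monotonicity is in place, the remainder of the argument is a short triangle-inequality calculation inside and across the Voronoi cells, and case (ii) is essentially the tight case where the hypothesis on $\Vert\bm{q} - \textbf{GT}(\bm{q})\Vert_2$ is calibrated exactly to match the fixed-entry-point bound.
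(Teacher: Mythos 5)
Your proposal is correct and follows essentially the same route as the paper: bound the hop count via Lemma \ref{lemma:b-monotonic-sum-of-r} with a minimum forward step and maximum backward step, then compare the resulting bounds using $\Vert\bm{d}_j - \textbf{GT}(\bm{q})\Vert_2 \leq \Bar{R}_j$ in case (i) and the triangle inequality with $\Delta_{\bm{q}} \leq \Bar{R} - \Bar{R}_j$ in case (ii). The ``structural assumption'' you flag is handled in the paper simply by defining these step-size constants as extrema ($\Bar{r}_{+}, \Bar{r}_{-}$, and their cell-local versions $\Bar{r}_{+,j}, \Bar{r}_{-,j}$, which in case (i) give a slightly tighter cell-dependent bound) over the finite set of $b$-monotonic paths, so no extra hypothesis is needed.
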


\begin{proof}[Proof of \cref{thm:upperbound_proof}]
We prove the theorem by dividing into (i) and (ii). The visualized example of $b$-monotonic path of these two cases are \cref{fig:b-monotonic-example-cluster}. 

\textbf{(i)} $\bm{q}\in\mathcal{U}_j$ $\land$ $\textbf{GT}(\bm{q})\in\mathcal{U}_j$

We assume the condition (i) in \cref{thm:upperbound_proof} is met. We denote the selected entry point $\bm{d}_j\in\mathcal{U}_j$, which is a site of Voronoi cell $\mathcal{U}_j$. Note that $v_s = \phi(\bm{d}_j)$ and $v_t=\phi(\textbf{GT}(\bm{q}))$.

We define the set of all $b $-monotonic paths ($b \leq B$) where both the starting and ending points are contained in the cell $\mathcal{U}_j$ as $\mathcal{T}(\mathcal{U}_j, B) $:
\begin{equation}
\mathcal{T}(\mathcal{U}_j, B) = \{
\mathcal{P}(\phi(\bm{x}_s), \phi(\bm{x}_t)) \mid
\bm{x}_s, \bm{x}_t \in \mathcal{U}_j
\}
\end{equation}

Moreover, we define the following metrics for a cell $\mathcal{U}_j$:
\begin{align}
    \Bar{r}_{+, j} &= \min_{
    \substack{
        \mathcal{P}\in\mathcal{T}(\mathcal{U}_j, B), \\
        r\in\mathcal{R}(\mathcal{P})_{+}
        }
    } r \\
    \Bar{r}_{-, j} &= \max_{
    \substack{
        \mathcal{P}\in\mathcal{T}(\mathcal{U}_j, B), \\
        r\in\mathcal{R}(\mathcal{P})_{-}
        }
    } |r|.
\end{align}

Considering $\bm{d}_j, \textbf{GT}(\bm{q})\in\mathcal{U}_j$ and \cref{lemma:b-monotonic-sum-of-r}, the following inequality holds:
\begin{align}
    \Bar{R}_j &\geq \Vert\bm{d}_j - \mathbf{GT}(\bm{q})\Vert_2 \\
    &= \sum_{r\in\mathcal{R}(\mathcal{P})_{+}} r + \sum_{r\in\mathcal{R}(\mathcal{P})_{-}} r \\
    &\geq (l - b)\Bar{r}_{+, j} - b \Bar{r}_{-, j}. \label{eq:local_cell_eval}
\end{align}
We evaluate $l$ by an upper bound using $b \leq B$ as follows:
\begin{align}
    l &\leq \frac{\Bar{R}_j}{\Bar{r_{+, j}}} + b\left(1 + \frac{\Bar{ r}_{-, j}}{\Bar{ r}_{+, j}}\right) \\
    &\leq \frac{\Bar{R}_j}{\Bar{r_{+, j}}} + B\left(1 + \frac{\Bar{ r}_{-, j}}{\Bar{ r}_{+, j}}\right) \defeq  \Bar{l} \label{eq:lbar}
\end{align}
Note that $B$ is a constant, and $\Bar{l}$ only depends on the cell $\mathcal{U}_j$, not on each path.

We define the same metrics for $\mathcal{U}(\mathcal{X})$:
\begin{align}
    \Bar{r}_{+} &= \min_{
    \substack{
        \mathcal{P}\in\mathcal{T}(\mathcal{U}(\mathcal{X}), B), \\
        r\in\mathcal{R}(\mathcal{P})_{+}
        }
    } r \label{eq:global_r_plus} \\
    \Bar{r}_{-} &= \max_{
    \substack{
        \mathcal{P}\in\mathcal{T}(\mathcal{U}(\mathcal{X}), B), \\
        r\in\mathcal{R}(\mathcal{P})_{-}
        }
    } |r|. \label{eq:global_r_minus}
\end{align}
Using \cref{eq:global_r_plus,eq:global_r_minus}, in the same way as \cref{eq:local_cell_eval}, we evaluate the number of hops $l_0$ in a path starting from the fixed central point $\phi(\bm{d}_0)$ as follows:
\begin{align}
    l_0 \leq \frac{\Bar{R}}{\Bar{ r_{+}}} + B\left(1 + \frac{\Bar{ r}_{-}}{\Bar{ r}_{+}}\right) \defeq  \Bar{l}_0 \label{eq:lbar_0}
\end{align}
Note that $\Bar{l_0}$ depends only on the region $\mathcal{U}(\mathcal{X})$.

Since $\mathcal{U}_j \subset \mathcal{U}(\mathcal{X})$, the following inequalities hold:
\begin{align} 
    \Bar{R} \geq \Bar{R}_j\text{, }~\Bar{r}_{+} \leq \Bar{r}_{+, j}\text{, }~\Bar{r}_{-} &\geq \Bar{r}_{-, j} \label{eq:ineq_Rrr}
\end{align}
\cref{eq:lbar,eq:lbar_0,eq:ineq_Rrr} leads to $\Bar{l} \leq \Bar{l}_0$.

\textbf{(ii)} $\bm{q}\in\mathcal{U}_j$ $\land$ $\textbf{GT}(\bm{q})\notin\mathcal{U}_j$ $\land$ $\Vert\bm{q} - \textbf{GT}(\bm{q})\Vert_2 \leq \Bar{R} - \Bar{R}_j$

We assume that condition (ii) in \cref{thm:upperbound_proof} is met. Since $\textbf{GT}(\bm{q})\notin\mathcal{U}_j$, $\textbf{GT}(\bm{q})\in\mathcal{U}_k$ $(j\neq k)$ holds. From the triangle inequality, the following inequality holds:
\begin{align}
    \Vert\bm{d}_j - \textbf{GT}(\bm{q})\Vert_2 \leq \Vert\bm{d}_j - \bm{q}\Vert_2 + \Vert\bm{q} - \textbf{GT}(\bm{q})\Vert_2
\end{align}
We define a constant $\Delta_{\bm{q}} \defeq  \Vert\bm{q} - \textbf{GT}(\bm{q})\Vert_2$. Since $\bm{d}_j, \bm{q}\in\mathcal{U}_j$ (\cref{fig:b-monotonic-example-cluster} (b)), the following inequality holds:
\begin{align}
    \Vert\bm{d}_j - \textbf{GT}(\bm{q})\Vert_2 \leq \Bar{R}_j + \Delta_{\bm{q}}.
\end{align}
In the same way as \textbf{(i)}, we evaluate $l$ by:
\begin{align}
    l \leq \frac{\Bar{R}_j + \Delta_{\bm{q}}}{\Bar{ r_{+}}} + B\left(1 + \frac{\Bar{ r}_{-}}{\Bar{ r}_{+}}\right) \defeq \Bar{l}. \label{eq:lbar_case2}
\end{align}
Note that we replaced $r_{+,j}, r_{-,j}$ in \cref{eq:lbar} with $r_{+}, r_{-}$ because $\textbf{GT}(\bm{q})\notin\mathcal{U}_j$ unlike \textbf{(i)}.

In contrast, we evaluate the number of hops $l_0$ in a path starting from the fixed central point $\phi(\bm{d}_0)$ as follows:
\begin{align}
    l_0 \leq \frac{\Bar{R}}{\Bar{ r_{+}}} + B\left(1 + \frac{\Bar{ r}_{-}}{\Bar{ r}_{+}}\right) \defeq \Bar{l}_0. \label{eq:lbar_0_case2}
\end{align}

The condition $\Delta_q \leq \Bar{R} - \Bar{R}_j$ leads to $\Bar{l} \leq \Bar{l}_0$.

\end{proof}


\subsection{Comparison to Previous Works}
Our theoretical analysis extends the one provided in~\cite{ni2023diskann++}. The previous work~\cite{ni2023diskann++} showed that the adaptive entry point selection provides a better upper bound of hops than the fixed central entry point, assuming the graph is MSNET~\cite{dearholt1988monotonic_MSNET}. Considering those outcomes, we introduced the generalized concept $B$-MSNET (\cref{def:B-MSNET}), which does not require that a pair of nodes have an exact \textit{monotonic} path. It better captures the actual graph in practical algorithms~\cite{Fu2017FastAN_NSG,jayaram2019diskann}. Then, we proved that the same statement holds even when the graph is $B$-MSNET (\cref{thm:upperbound_proof}). In addition, the existing theoretical statements require that the data be distributed in a unit sphere. We loosen this assumption in \cref{thm:upperbound_proof} by considering the finite region $\mathcal{U}(\mathcal{X})$ and its Voronoi partition $\{\mathcal{U}_1,\dots,\mathcal{U}_K\}$. Finally, we simplify the existing approach by introducing \cref{lemma:b-monotonic-sum-of-r}. It clarifies things even when considering a more complicated concept, $B$-MSNET.

\section{Empirical Findings}
\subsection{Experiment Settings}
We evaluate the algorithm regarding accuracy, speed, and memory usage. We measure the accuracy of the algorithm by $\text{Recall@}k$. Given the ground-truth top $k$ neighbor of a query $\mathcal{R}$ and the output top $k$ neighbor $\hat{\mathcal{R}}$, we define $\text{Recall@}k = \frac{\vert\mathcal{R}\cap\hat{\mathcal{R}}\vert}{k}$. We evaluate the speed of the algorithm by QueriesPerSecond (QPS), which corresponds to the throughput of the algorithm. We define it as the average number of processed queries per second.

We conducted experiments on an Intel(R) Core(TM) i7-10870H CPU @ 2.20GHz with 64GB RAM, setting the number of threads to 8. For NSG~\cite{Fu2017FastAN_NSG} and k-means clustering, we used the implementation provided by the \texttt{Faiss} library~\cite{johnson2019billion_faiss,douze2024faiss_library}. We set $R=32, L=64, \text{and}~C=132$ for NSG. The implementation of NSG we adopt uses NN-Descent~\cite{dong2011efficient_Kgraph_NNDescent} as the base KNN graph. The parameters of NN-Descent are $K=64, L=114, R=100, S=10, \text{and } iter=10$. They are the default settings. For DiskANN~\cite{jayaram2019diskann}, we used the official implementation~\cite{diskann-github} by Microsoft. We set $R=70, L=125, \text{and}~\alpha=1.2$ for DiskANN.

\subsection{Evaluation on various datasets} \label{sec:eval_on_various_dataset}
We demonstrate the adaptively selected entry point outperforms the fixed central entry point on NSG index. We used 8 datasets with various data characteristics. \cref{table:datasets} describes the dimensionality and the size of query set. \textbf{SIFT 1M} and \textbf{GIST 1M}~\cite{Jegou2011ProductQuantization} consists of classical image descriptors. \textbf{Deep1M}~\cite{babenko2016efficientDeep1M}, \textbf{OpenAI 1M}~\cite{MS2023_BigANNBenchRepo}, and \textbf{CLIP I2I 1M} are feature vectors from deep neural networks. OpenAI 1M consists of text embeddings from wiki dataset. CLIP I2I 1M consists of CLIP~\cite{radford2021learningCLIP} image embeddings from the LAION 5B dataset~\cite{schuhmann2022laion5b}. \textbf{Gauss 1M} consists of artificial samples from a gaussian mixture distribution with 10 mixture components. \textbf{Yandex T2I 1M}~\cite{Simhadri2022NeuripsComp} and \textbf{CLIP T2I 1M} are OOD datasets, where the statistical distribution of the database and queries are different. CLIP T2I 1M consists of image embeddings from the LAION 400M dataset~\cite{schuhmann2021laion} as the database and text embeddings as queries.



\begin{table}[tb]
\caption{Dataset descriptions. `Dim` means the dimensionality of vectors.}
\label{table:datasets}
\vskip 0.15in
\begin{center}
\begin{small}
\begin{tabular}{@{}lccl@{}}
\toprule
Dataset & Dim & \#Query & Description \\
\midrule
SIFT 1M & 128 & 10,000 & SIFT Descriptor \\
GIST 1M & 960 & 1,000 & GIST Descriptor \\
Deep 1M & 96 & 10,000 & Image Embedding \\
OpenAI 1M & 1536 & 10,000 & Text Embedding\\
CLIP I2I 1M & 512 & 10,000 & Image Emebdding \\
Gauss 1M & 128 & 10,000 & Random Samples\\
Yandex T2I 1M & 200 & 100,000 & Image \& Text Embedding\\
CLIP T2I 1M & 768 & 10,000 & Image \& Text Embedding\\
\bottomrule
\end{tabular}
\end{small}
\end{center}
\vskip -0.1in
\end{table}

\begin{figure*}[tb]
\centering
\vskip 0.2in

\begin{center}
\begin{tabular}{@{}cccc@{}}
\subf{\includegraphics[width=0.23\textwidth]{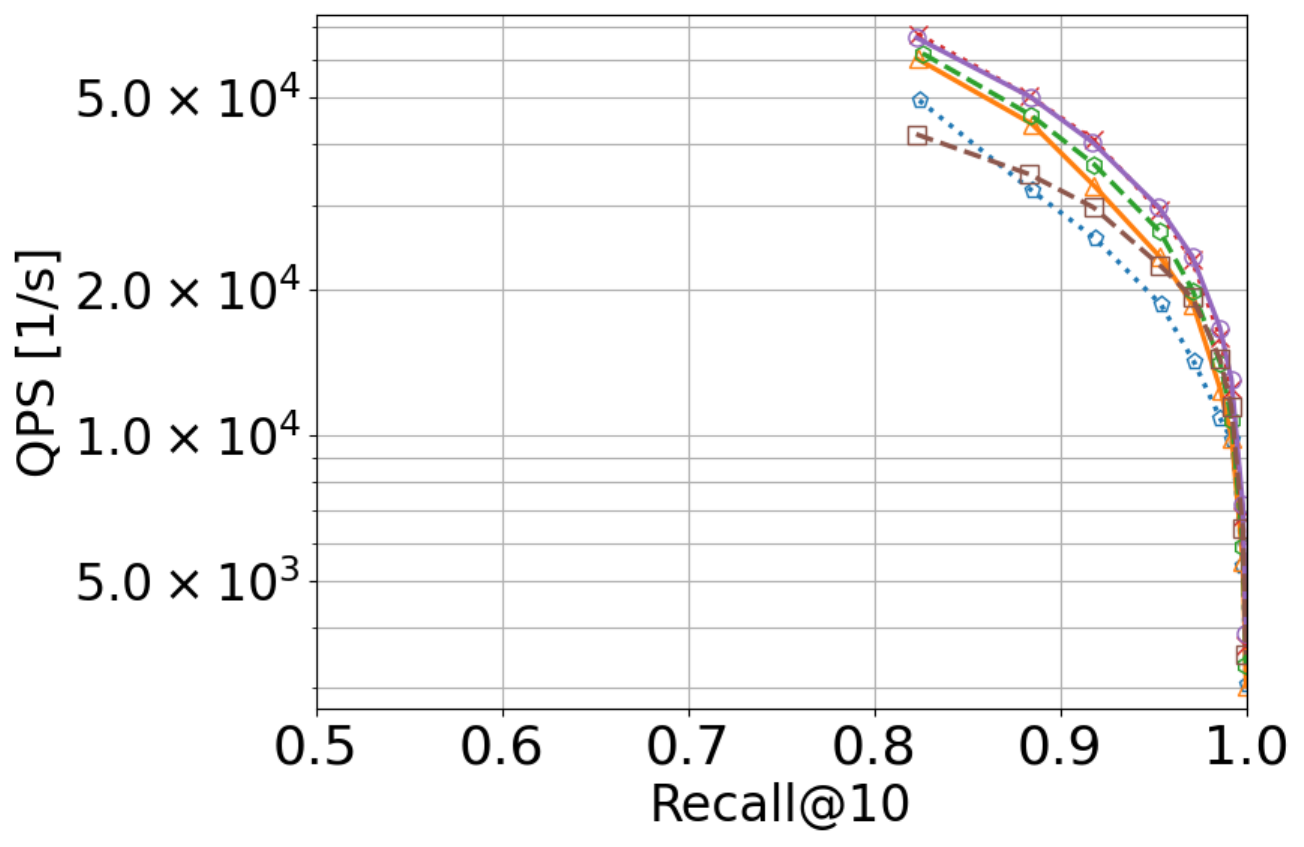}}
     {(a) SIFT 1M}
&
\subf{\includegraphics[width=0.23\textwidth]{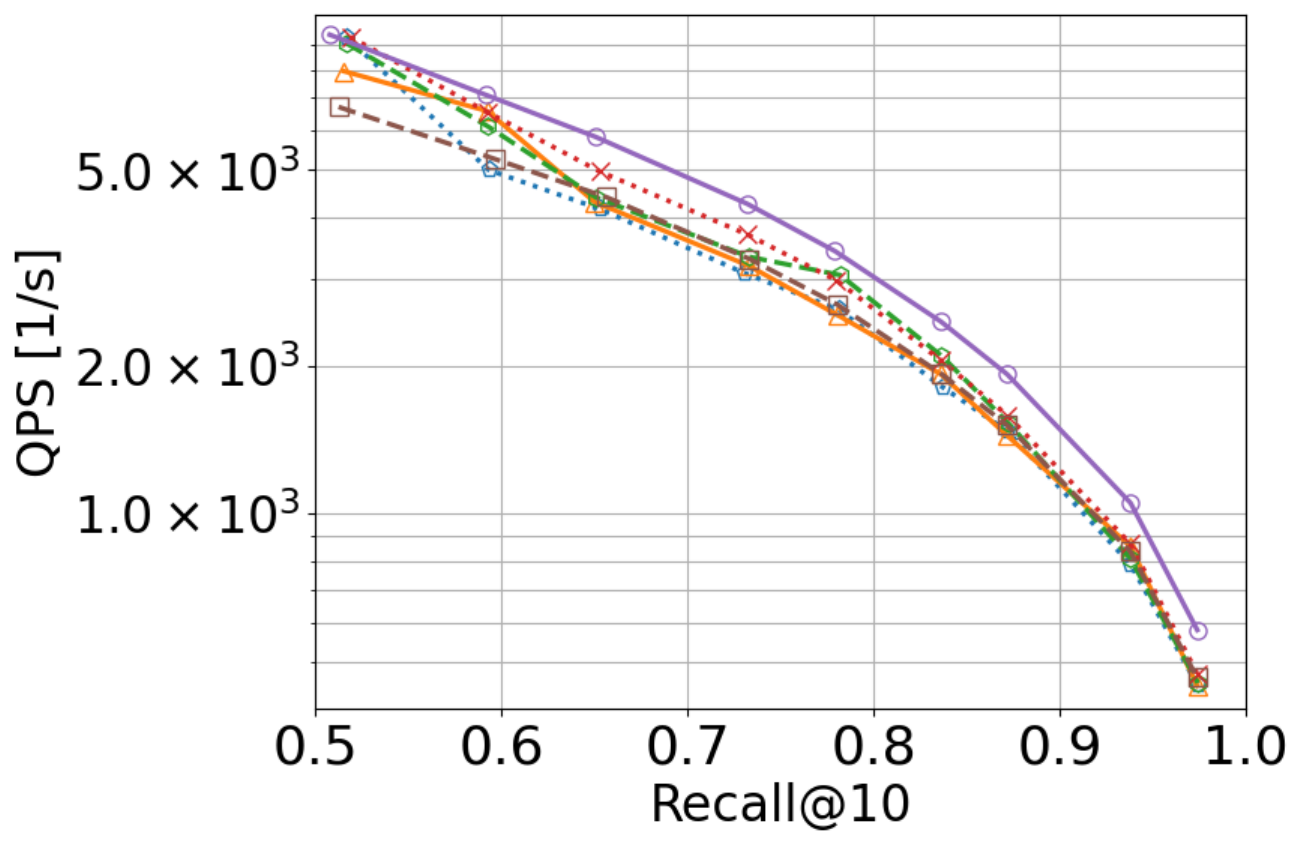}}
     {(b) GIST 1M}
&
\subf{\includegraphics[width=0.23\textwidth]{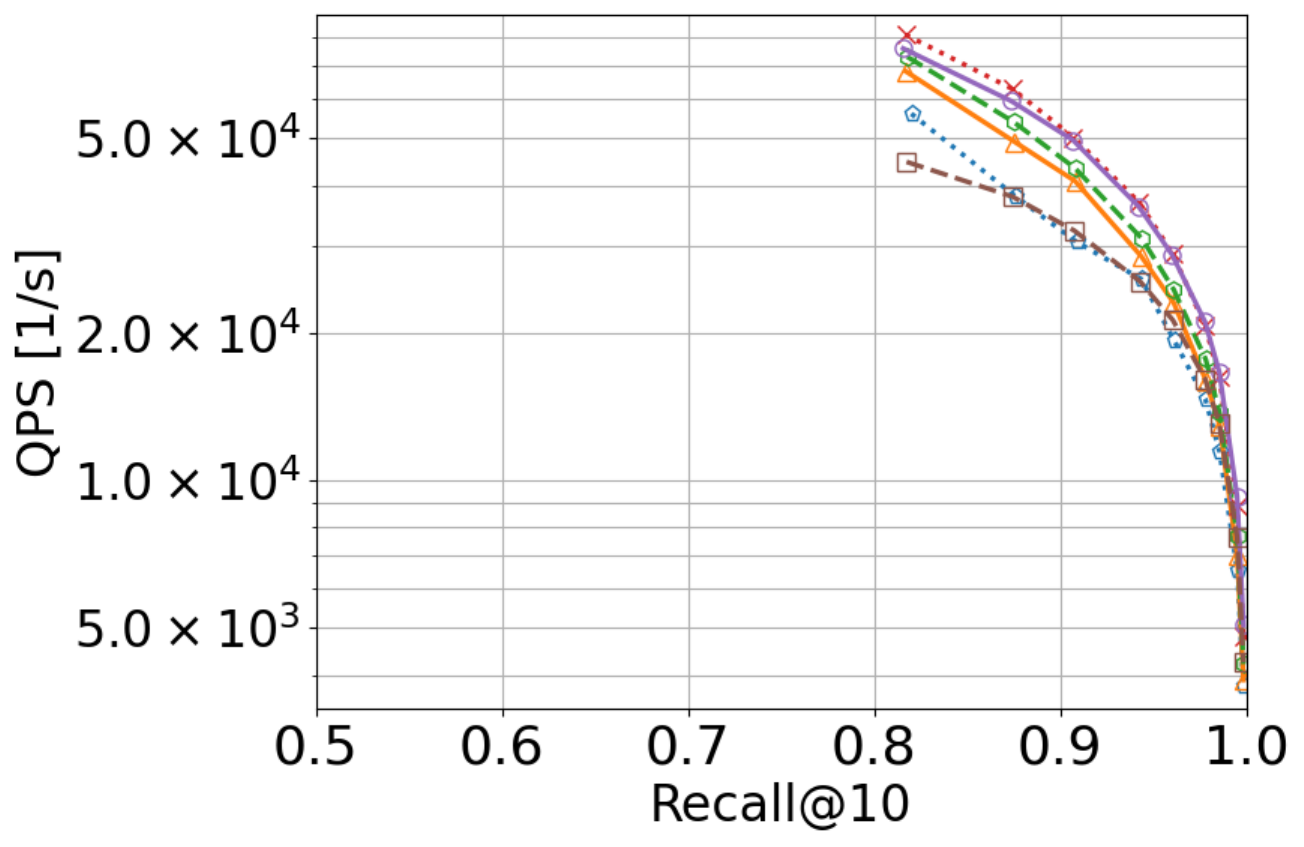}}
     {(c) Deep 1M}
&
\subf{\includegraphics[width=0.23\textwidth]{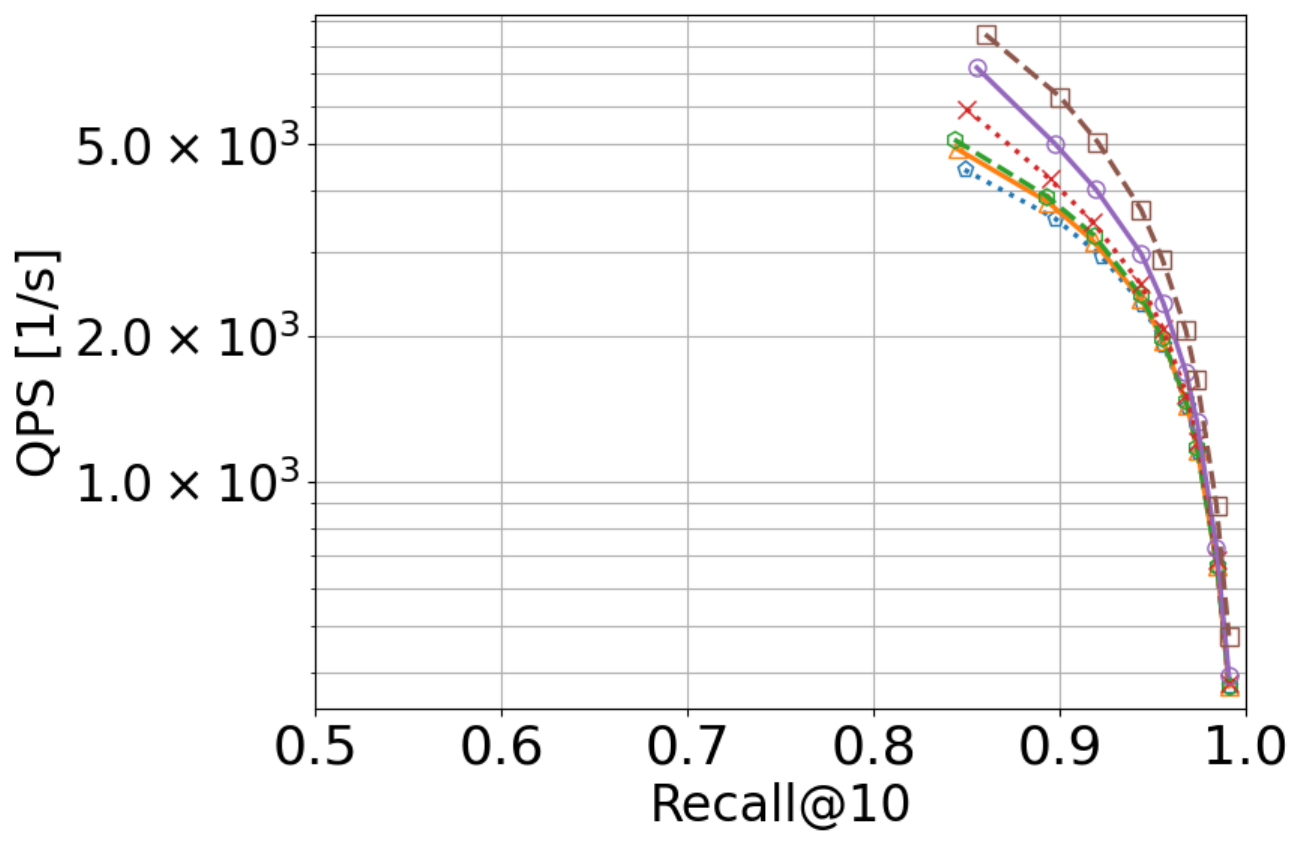}}
     {(d) OpenAI 1M}
\\
\subf{\includegraphics[width=0.23\textwidth]{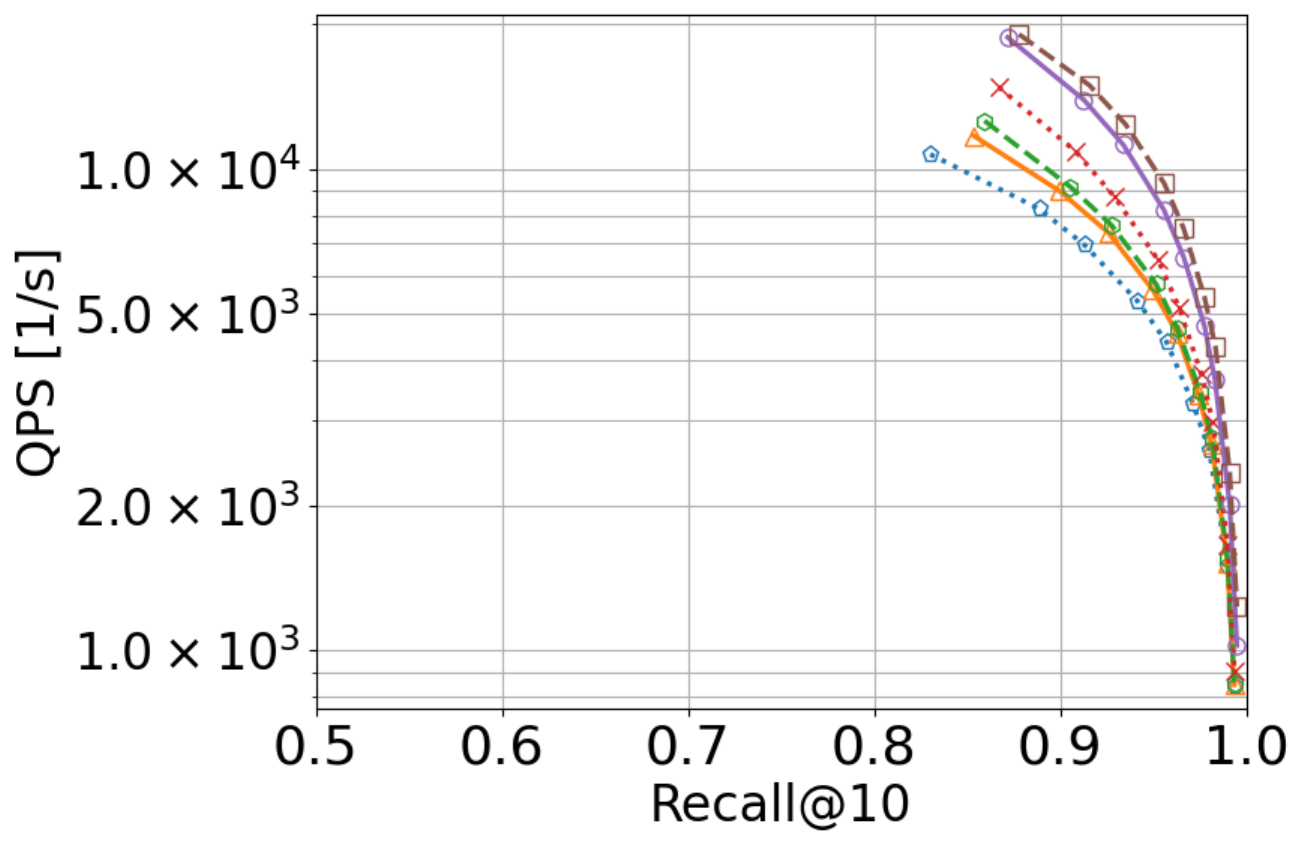}}
     {(e) CLIP I2I 1M}
&
\subf{\includegraphics[width=0.23\textwidth]{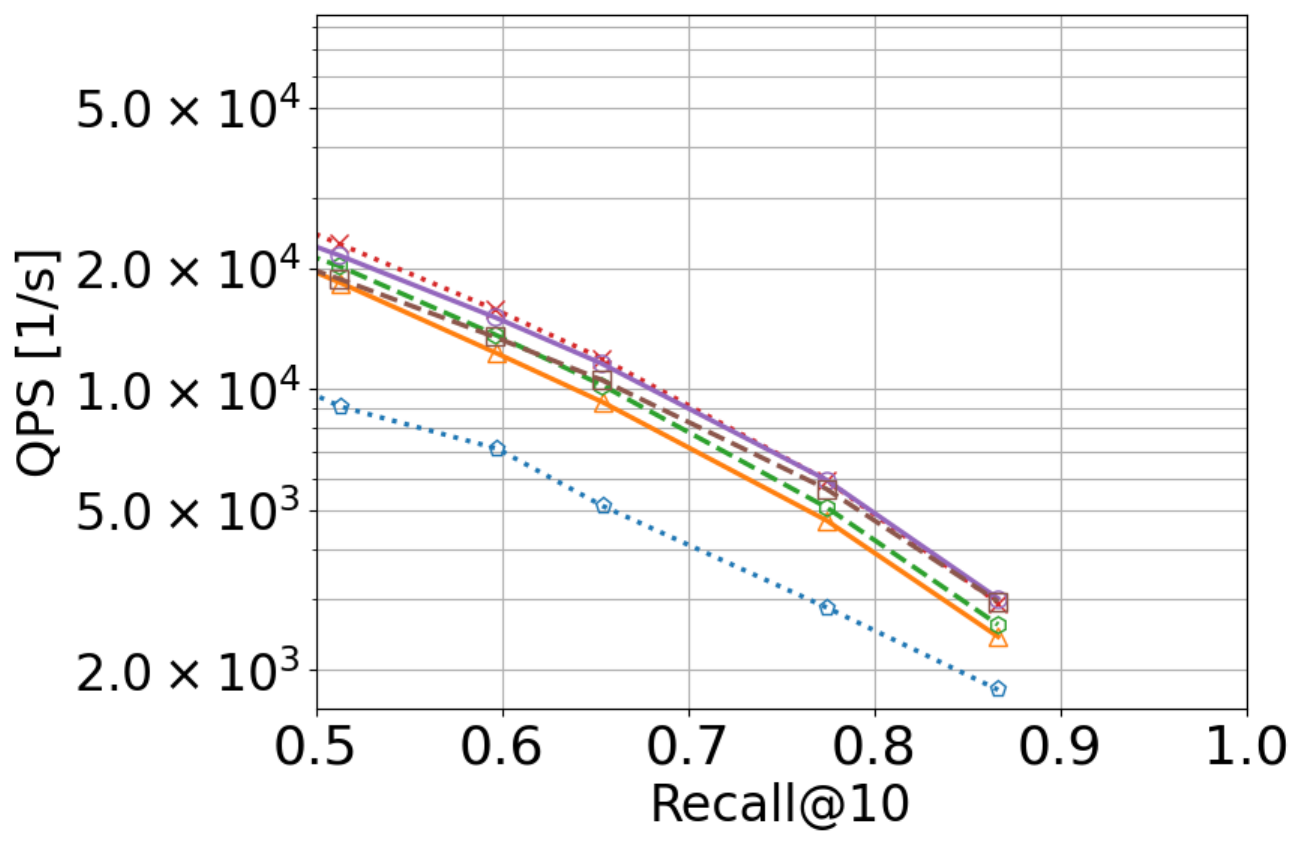}}
     {(f) Gauss 1M}
&
\subf{\includegraphics[width=0.23\textwidth]{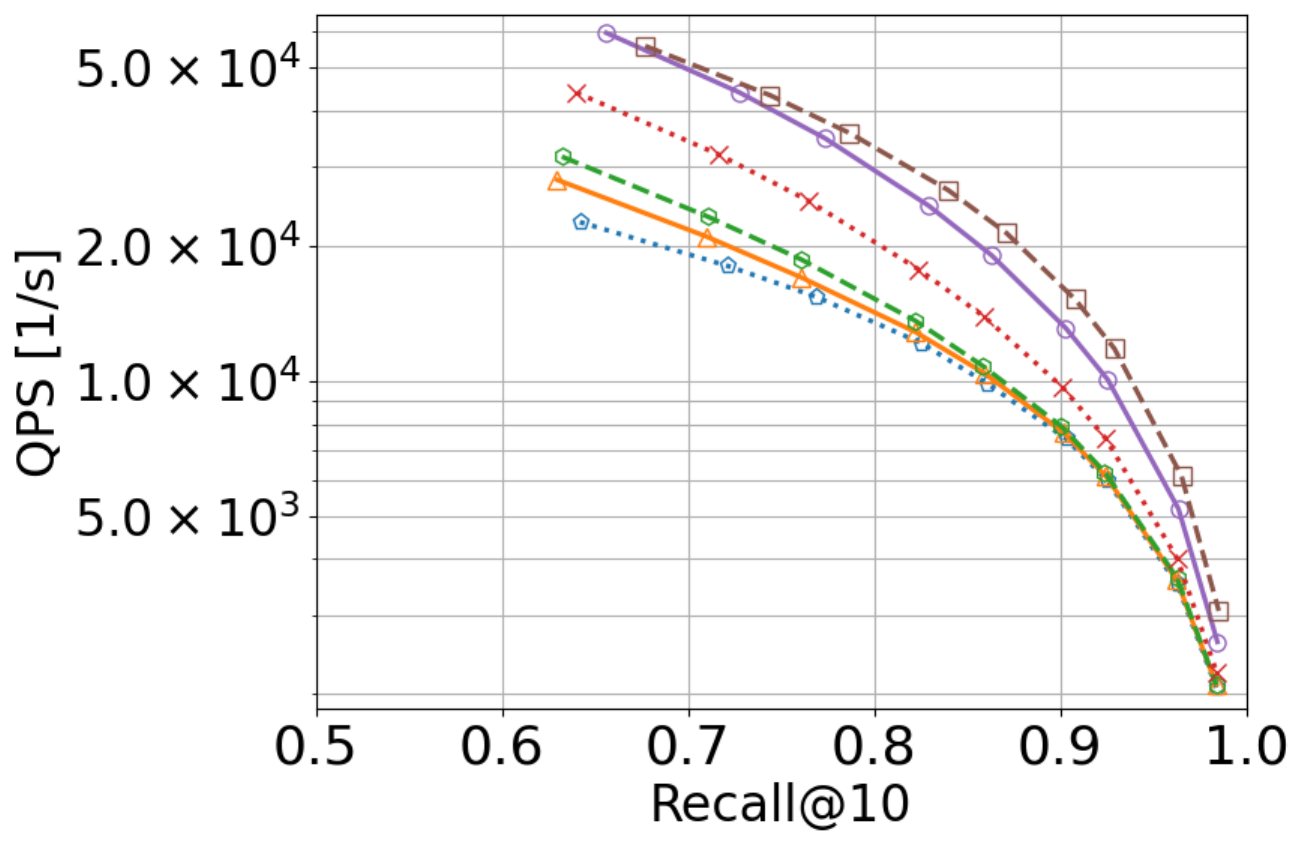}}
     {(g) Yandex T2I 1M}
&
\subf{\includegraphics[width=0.23\textwidth]{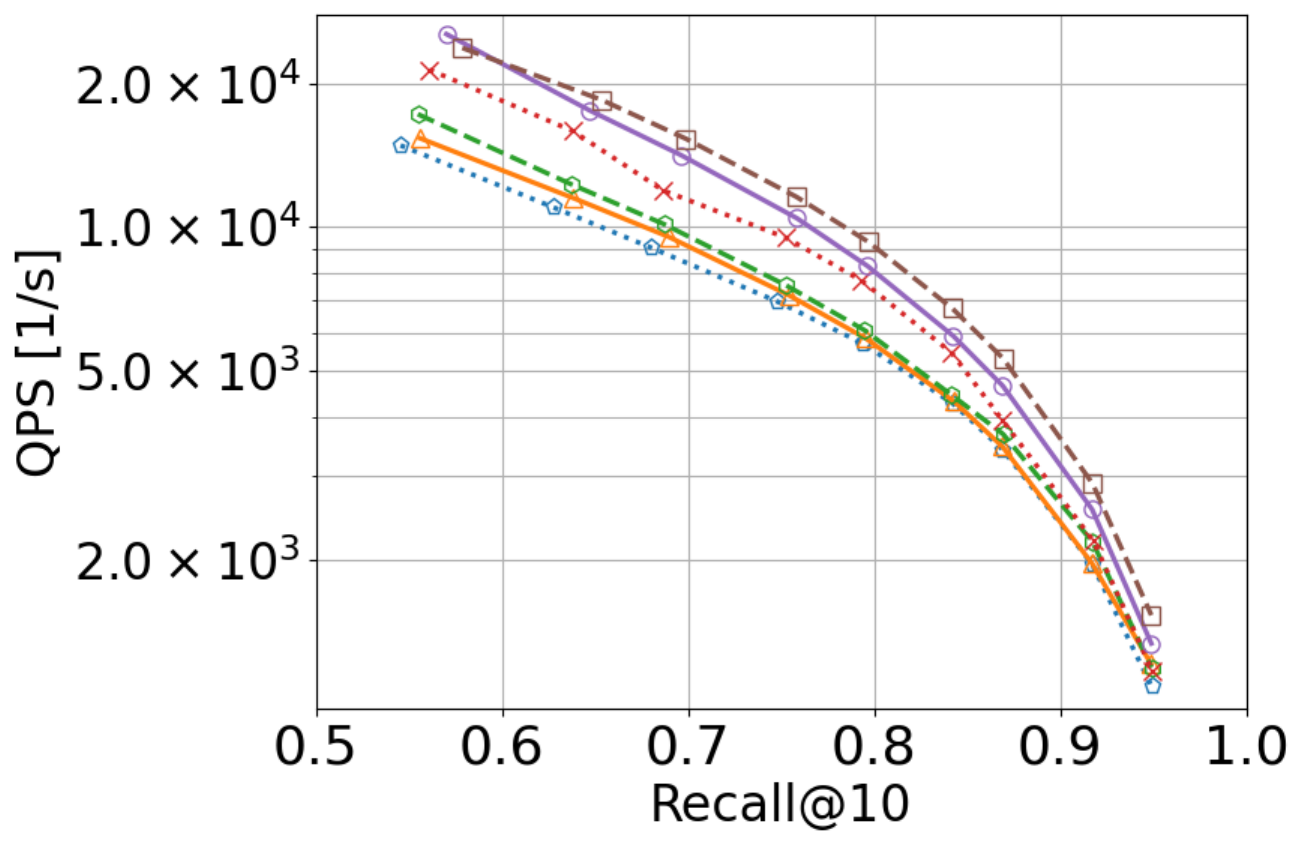}}
     {(h) CLIP T2I 1M}
\\
\end{tabular}
\end{center}

\begin{subfigure}
    \centering
    \includegraphics[width=0.7\textwidth]{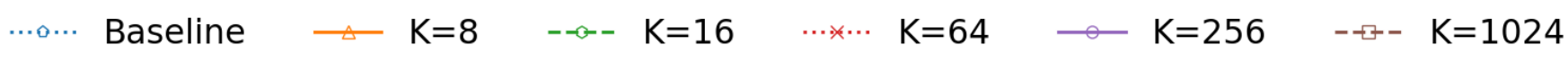}
\end{subfigure}

\caption{The evaluation of NSG~\cite{Fu2017FastAN_NSG} with adaptive entry point selection on various datasets. A curve on the upper right side is better than others in terms of accuracy-speed tradeoff. We sweep a curve by changing the length of the search queue $L\in\{16, 24, 32, 48, 64, 96, 128, 256, 512\}$. We took the average of five measurements for each cases.}
\label{fig:result_nsg}
\vskip -0.2in
\end{figure*}

\cref{fig:result_nsg} shows the result when we apply the entry point selection method to NSG~\cite{Fu2017FastAN_NSG} index. We regard the vanilla NSG as the baseline and compare it to the cases on different $K$. We observe that the QPS improves 1.2 - 2.3 times for all datasets. \cref{fig:result_nsg} (a) - (d) shows the effectiveness in standard datasets with different dimensionalities. We also see the Recall@10 exhibits not so small improvement by 0.01 - 0.04 in CLIP I2I 1M, CLIP T2I 1M, and Yandex T2I 1M (\cref{fig:result_nsg} (e), (g), (h)). Remarkably, the QPS largely improved by 2.3x on a clustered dataset like Gauss 1M (\cref{fig:result_nsg} (f)).

We also demonstrate that the adaptive entry point selection has only a tiny amount of overhead. \cref{table:memory_usage} shows the memory overhead and the preparation time. We chose the record with the best tradeoff among the ones shown in \cref{fig:result_nsg}. It illustrates that we require only less than roughly 0.1\% of the original index size to obtain the best performance by the entry point selection. Additionally, the preparation of candidates is fast and poses almost no practical problem.

\begin{table}[tb]
\caption{The overhead memory usage (Mem. overhead) and the preparation time (Prep. time) of the adaptive entry point selection. The overhead is a ratio of the size of the additional index over that of the original index when achieving the best tradeoff among $K\in\{1, 8, 16, 64, 256, 1024\}$.}
\label{table:memory_usage}
\vskip 0.15in
\begin{center}
\begin{small}
\begin{tabular}{@{}lccc@{}}
\toprule
Dataset & Mem. overhead & Prep. time [sec] & $K$ \\
\midrule
SIFT 1M & 0.0055\% & 0.62 & $64$ \\
GIST 1M & 0.0252\% & 6.46 & $256$ \\
Deep 1M & 0.0052\% & 0.49 & $64$ \\
OpenAI 1M & 0.102\% & 46.87 & $1024$ \\
CLIP I2I 1M & 0.101\% & 22.67 & $1024$ \\
Gauss 1M & 0.0056\% & 4.16 & $64$ \\
Yandex T2I 1M & 0.0992\% & 6.00 & $1024$ \\
CLIP T2I 1M & 0.099\% & 16.14 & $1024$ \\
\bottomrule
\end{tabular}
\end{small}
\end{center}
\vskip -0.1in
\end{table}

\subsection{Overcoming Hard Instances} \label{sec:hardinstance}
In this section, we recap hard instances for graph-based index presented in~\cite{indyk2023worstcase}. The accuracy of existing graph-based indexes significantly drops in the instances. The adaptive entry point selection can overcome such hard instances by achieving non-zero accuracy with a much faster search than the vanilla index.


\begin{figure}[tb]
\vskip 0.2in
\centering
    \subfigure[NSG]{%
        \includegraphics[clip, width=0.46\columnwidth]{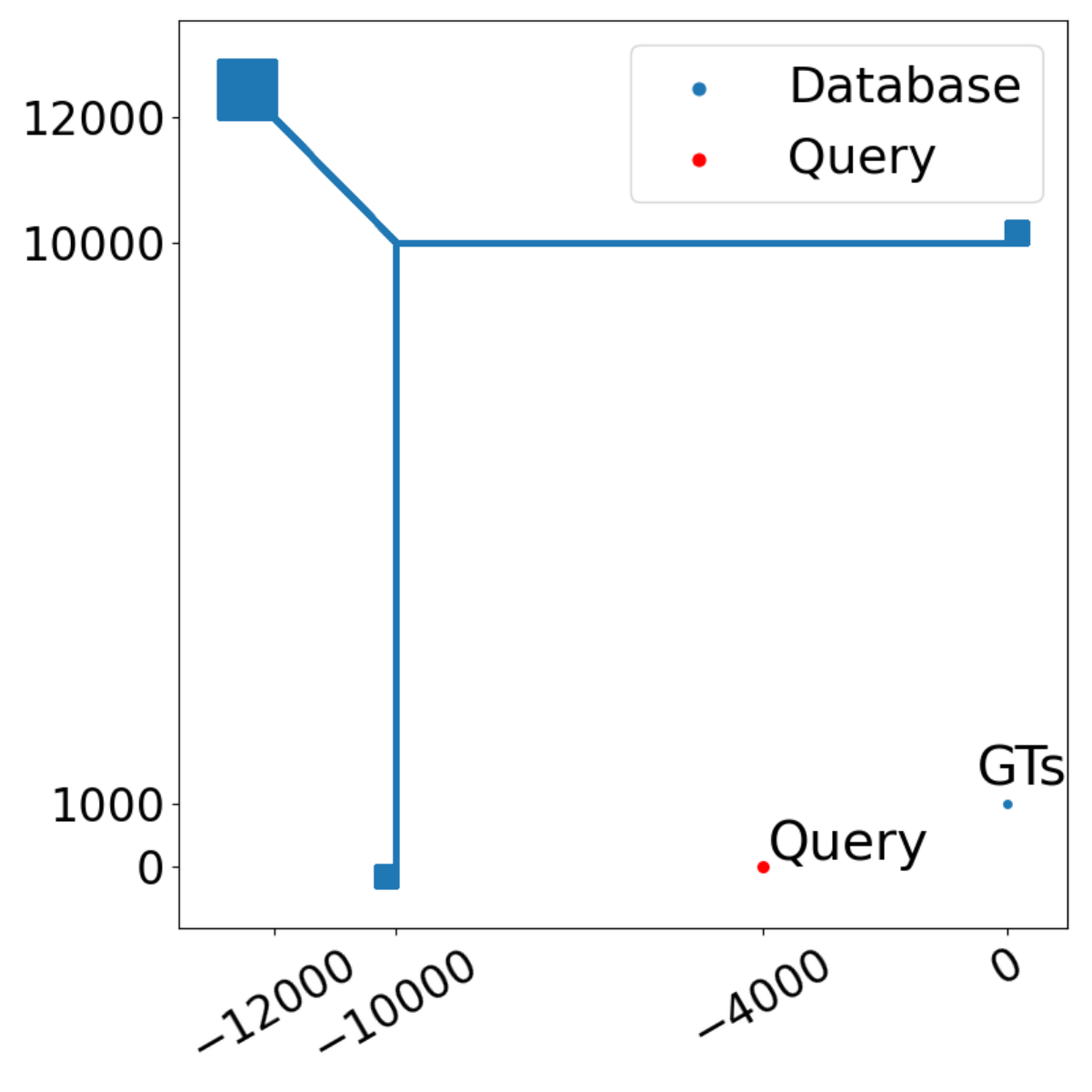}}%
    \subfigure[DiskANN]{%
        \includegraphics[clip, width=0.46\columnwidth]{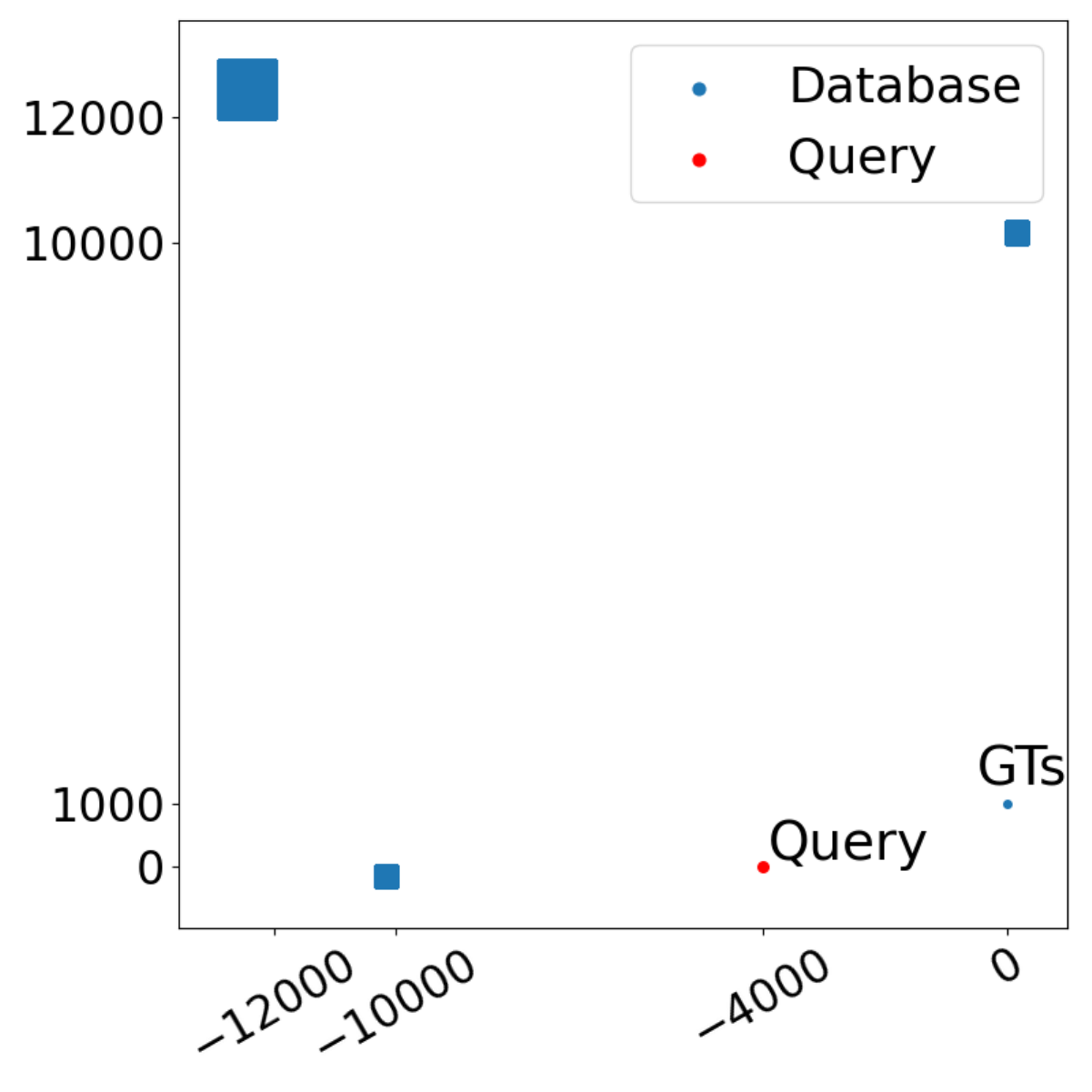}}%
    \caption{Visualization of reproduced hard instances presented in~\cite{indyk2023worstcase} for (a) NSG and (b) DiskANN. GTs represent the ground truth points.}
    \label{fig:worst_vis}
\vskip -0.2in
\end{figure}

Let us first recap the hard instances. We reproduced the hard case instances with 1M samples as shown in \cref{fig:worst_vis}. Because we evaluate all experiments with $\text{Recall}@10$, we create a tiny cluster of $10$ samples as ground truth samples. Their positions are the same as proposed in the original instance~\cite{indyk2023worstcase}.

\begin{figure}[tb]
\vskip 0.2in
\centering
    \subfigure[NSG]{%
        \includegraphics[clip, width=0.5\columnwidth]{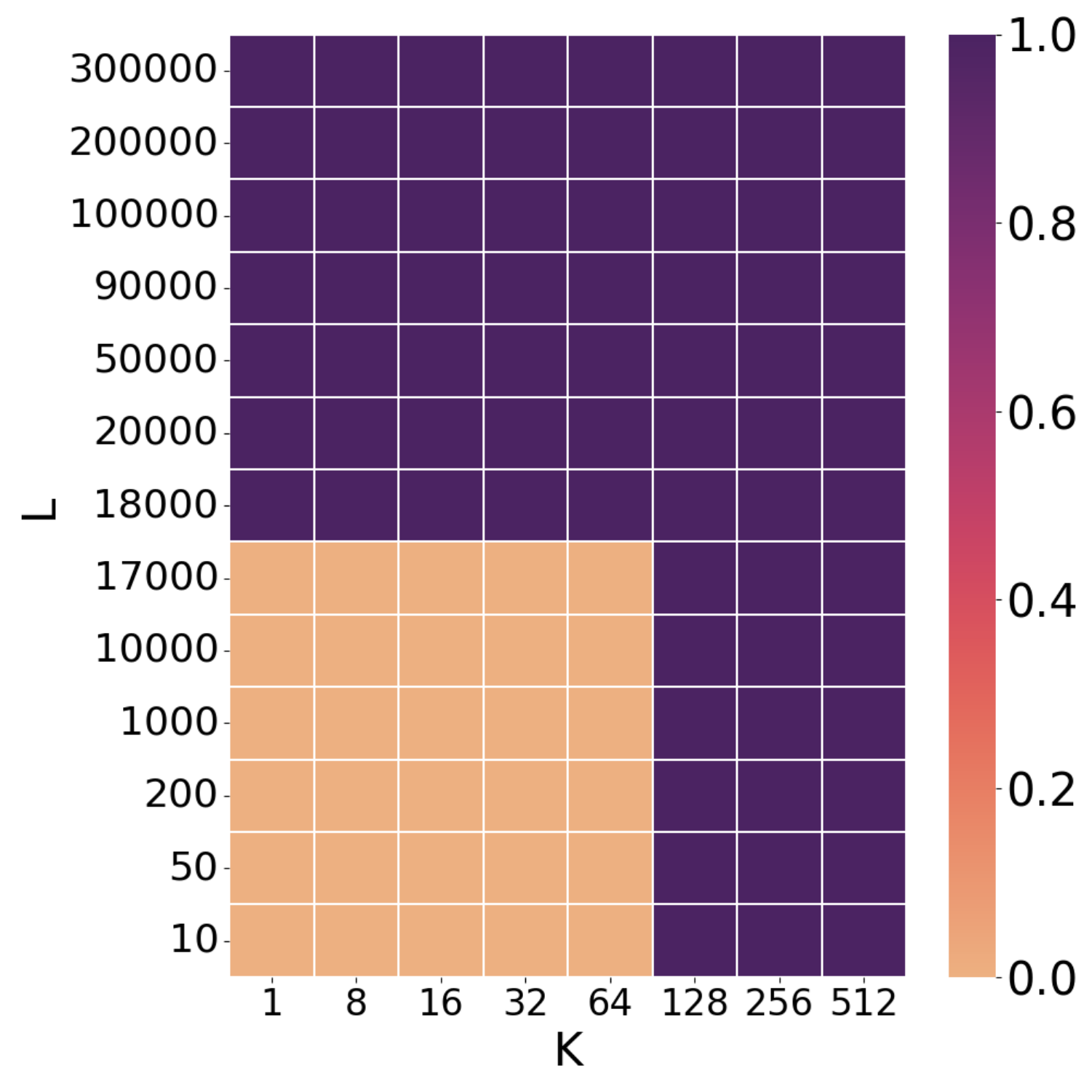}}%
    \subfigure[DiskANN]{%
        \includegraphics[clip, width=0.5\columnwidth]{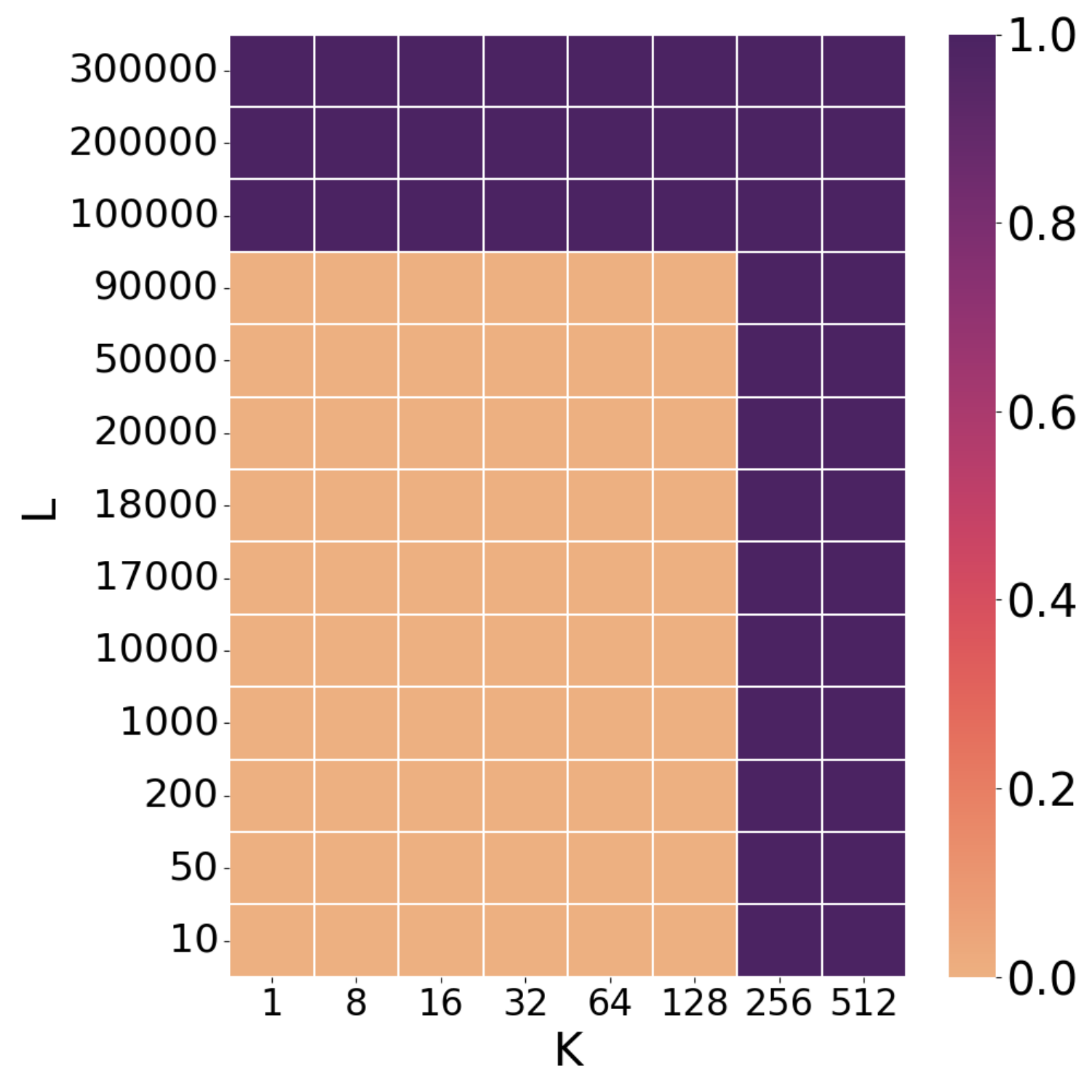}}%
    \caption{Heatmaps that represent the results of (a) NSG and (b) DiskANN on the hard instances when varying number of entry point candidates $K$ and the length of search queue $L$. Each cell in the heatmap represents the $\text{Recall@}10$. It shows only a part of all $L$ for better visualization.}
    \label{fig:worst_heatmap}
\vskip -0.2in
\end{figure}

\begin{figure}[tb]
\vskip 0.2in
\centering
    \subfigure[$K=16$ (Failed)]{%
        \includegraphics[clip, width=0.48\columnwidth]{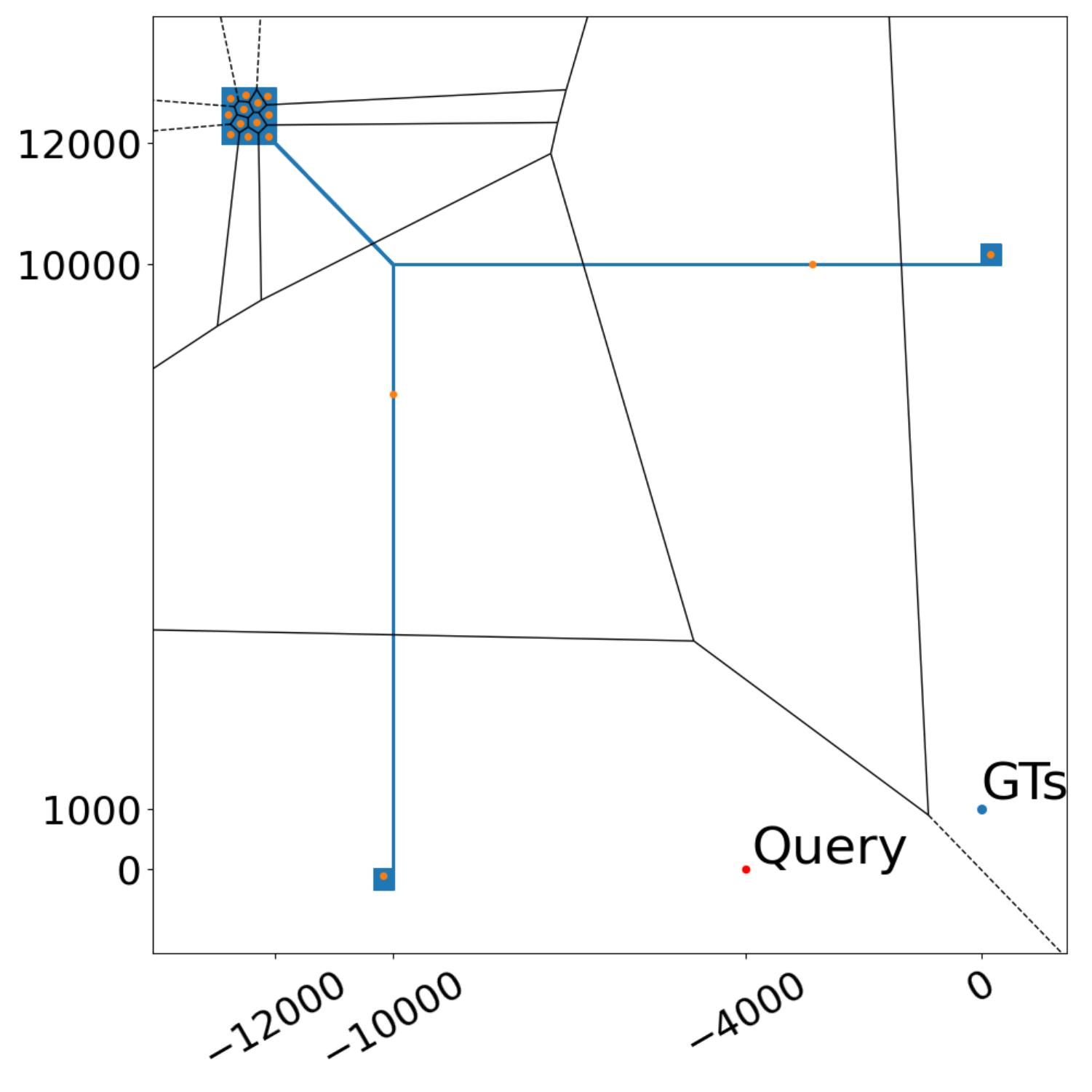}}%
    \subfigure[$K=128$ (Succeeded)]{%
        \includegraphics[clip, width=0.48\columnwidth]{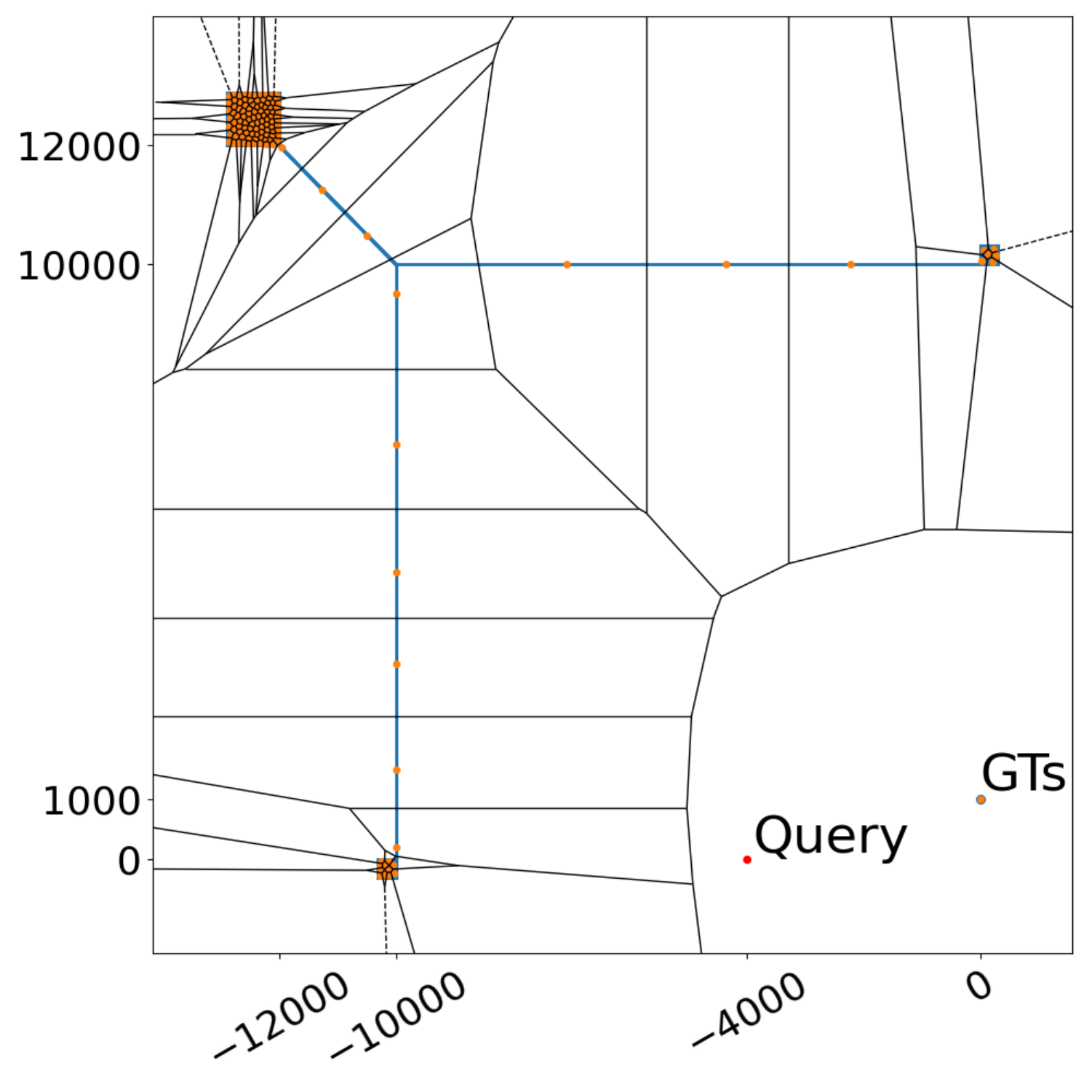}}%
    \caption{The Voronoi partition on the hard instance for NSG. (a) $K=16$ leads to zero accuracy in search (failure), but (b) $K=128$ leads to Recall@$10$ = 1.0 (success). Blue points show samples in the database $\mathcal{X}$. Orange points represent entry point candidates $\mathcal{D}$.}
    \label{fig:worst_voronoi}
\vskip -0.2in
\end{figure}

We select an entry point adaptively (\cref{sec:recap_ep_selection}) and perform a search on the hard instances. The target indexes are NSG and DiskANN. We change the number of entry point candidates $K\in\{1, 8, 16, 32, 64, 128, 256, 512\}$, where $K=1$ means the vanilla index without applying the adaptive selection. We also change the length of the search queue $L$. We sample appropriate $L$ ranging from $10$ to $1,000$ by $10$, from $1,000$ to $20,000$ by $1,000$, and from $20,000$ to $300,000$ by $10,000$. Considering the dataset size is 1M, the minimum value $10$ is reasonably small, and the maximum value $300,000$ is significantly large.

\begin{table*}[tb]
\caption{Improvement of the best QPS to reach non-zero accuracy for NSG and DiskANN on hard instances}
\label{table:worst_qps}
\vskip 0.15in
\begin{center}
\begin{small}
\begin{tabular}{@{}lccc@{}}
\toprule
Index & QPS (vanilla) & QPS (with entry point selection) & Improvement \\
\midrule
NSG & 30.98 $(K=1, L=18,000)$ & 10754 $(K=128, L=10)$ & \textbf{347x} \\
DiskANN & 28.92 $(K=1, L=100,000)$ & 10810 $(K=256, L=50)$ & \textbf{373x} \\
\bottomrule
\end{tabular}
\end{small}
\end{center}
\vskip -0.1in
\end{table*}

\cref{fig:worst_heatmap} shows the results of NSG and DiskANN on the hard instances. The baseline corresponds to the column of $K=1$. It shows that the baseline requires significantly large $L$ to gain non-zero accuracy. For example, NSG needs $L \geq 18,000$, and DiskANN needs $L \geq 100,000$. In contrast, given a larger $K$ value than 128 (256), we can achieve non-zero accuracy in NSG (DiskANN) even when $L$ is pretty small. Therefore, the adaptive entry point selection method can help the index to overcome the hard instances. Regarding efficiency, as shown in \cref{table:worst_qps}, the adaptive entry point selection significantly improves the maximum QPS to reach non-zero accuracy. The efficiency improved \textbf{347} times in NSG and \textbf{373} times in DiskANN.


\cref{fig:worst_voronoi} shows the Voronoi partitions for NSG on the hard instances. The representative points of the partitions are entry point candidates $\mathcal{D}$. The entry point candidates gather around three islands with many database points. As $K$ increases, entry point candidates are finally located on a distant small island where the ground truth is, which is annotated with the label "GTs" in \cref{fig:worst_voronoi}. That enables us to reach the ground truths immediately.

\subsection{Parameter Sensitivity for $K$}

\begin{figure}[tb]
\vskip 0.2in
\centering
    \subfigure[$\text{Recall@}10$]{%
        \includegraphics[clip, width=0.5\columnwidth]{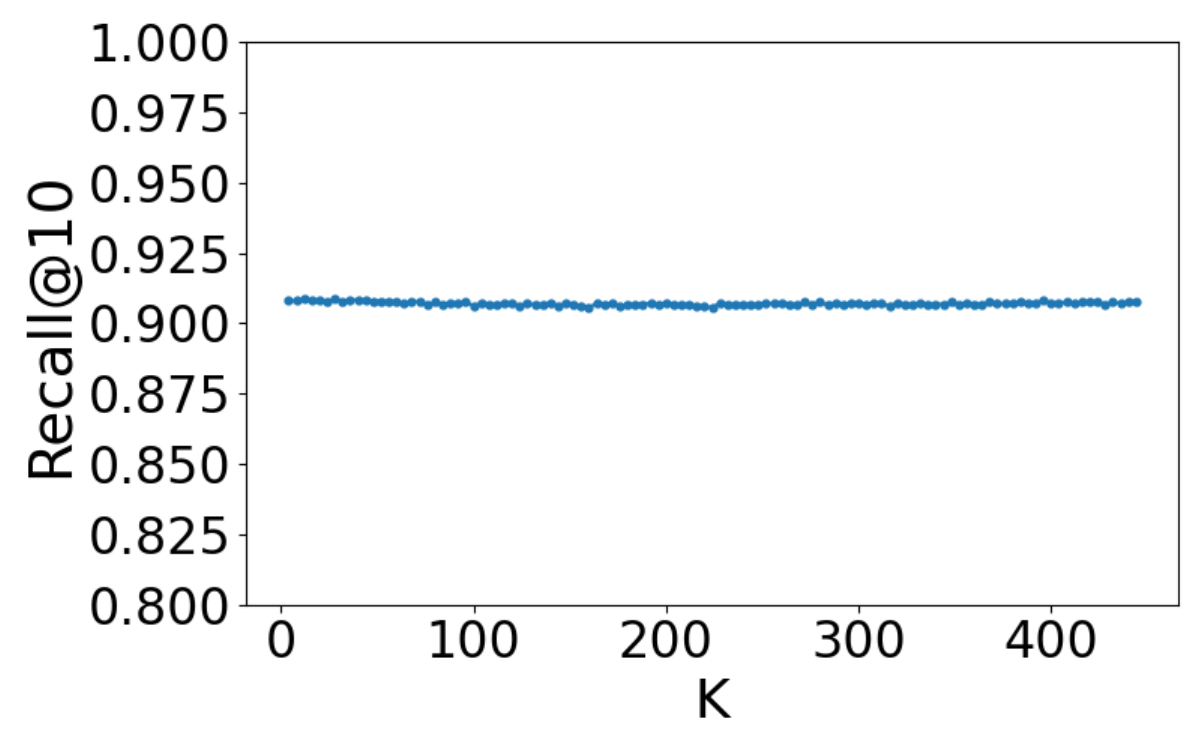}}%
    \subfigure[QPS]{%
        \includegraphics[clip, width=0.5\columnwidth]{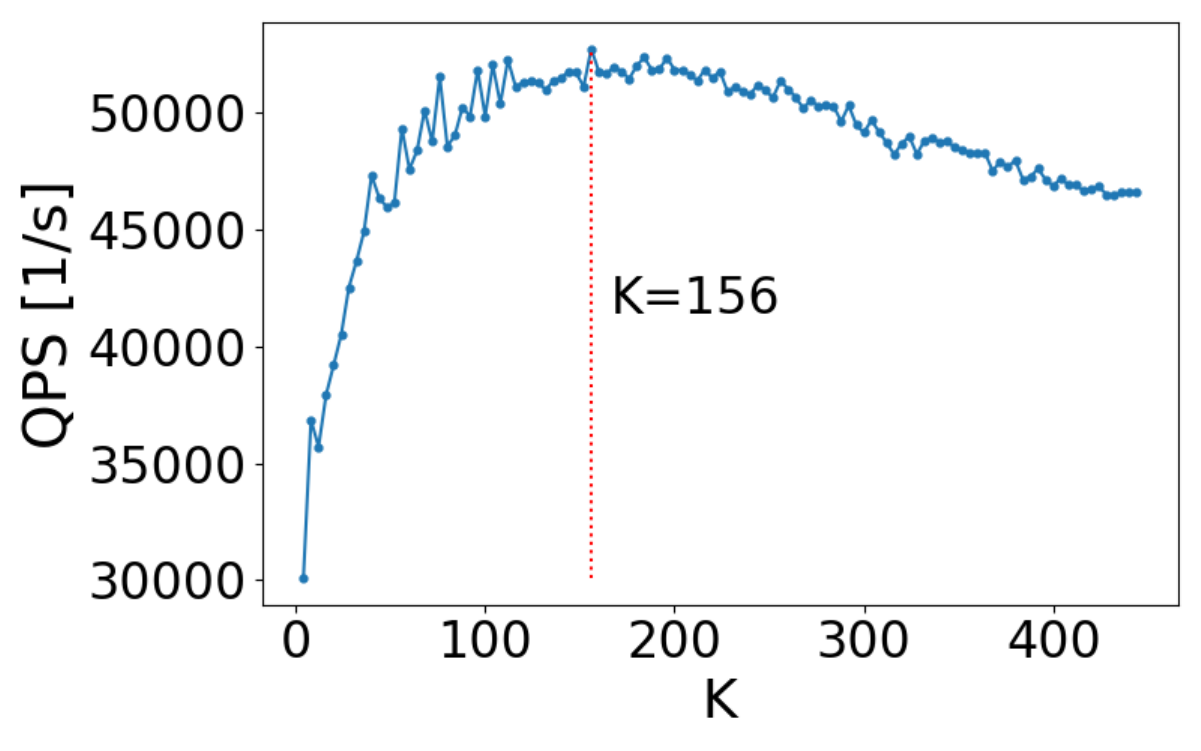}}%
    \caption{(a) $\text{Recall@}10$ and (b) QPS of NSG on Deep-1M dataset with different $K$ values. We set the length of search queue to $32$. The red dotted line in (b) shows a peak point of QPS-$K$ curve, which corresponds to $K=156$.}
    \label{fig:ablation_K}
\vskip -0.2in
\end{figure}

We studied the sensitivity of the performance on the number of entry point candidates $K$. We measured the accuracy and speed of NSG on Deep 1M dataset~\cite{babenko2016efficientDeep1M} for various $K\in\{4, 8, \dots, 444\}$. We chose this dataset because it exhibits the standard performance curve as demonstrated in \cref{sec:eval_on_various_dataset}.

\cref{fig:ablation_K} shows the performance curve when changing $K$. Recall@$10$ remains almost unchanged at around $0.90$, regardless of $K$. On the other hand, the QPS exhibits a roughly unimodal change, though it is somewhat jagged, with the value of $K$. These results show that the method is quite tractable. We can find the almost optimal $K$ regarding an accuracy-speed tradeoff. In this case, it is $K=156$.


\section{Conclusion}
Our study provides the theoretical and empirical analysis of the adaptive entry point selection based on k-means clustering. We introduced novel concepts of $b$\textit{-monotonic path} and $B$\textit{-MSNET}. They capture the actual graph-based indexes better than existing concepts. Our core theorem extends the previous work and demonstrates that the method is beneficial in more general situations. We also demonstrated that the adaptive entry point selection improves the search speed by 1.2 - 2.3 times with only slight memory overhead on various datasets. Remarkably, we can overcome the artificial hard instances by the entry point selection. 

Our newly introduced concepts will lead to future theoretical research on a graph-based index based on a more realistic situation. However, it is still unclear how many actual cases achieve the conditions we provide in \cref{thm:upperbound_proof}. Thus, a natural future direction would be considering such connections between the theory and the empirical findings. Another research direction would be to theoretically analyze the average performance improvement obtained by the adaptive entry point selection.

\section*{Impact Statements}
This paper presents work whose goal is to advance the field of Machine Learning. There are many potential societal consequences of our work, none which we feel must be specifically highlighted here.




\nocite{langley00}

\bibliography{example_paper}
\bibliographystyle{icml2024}

\newpage
\appendix
\onecolumn



\section{Voronoi Partitions on the Hard Instances}
This section provides visualized examples of Voronoi partitions on the hard instances~\cite{indyk2023worstcase} for NSG~\cite{Fu2017FastAN_NSG} and DiskANN~\cite{jayaram2019diskann}. The ones in \cref{sec:hardinstance} correspond to the case of $K=16$ and $K=128$ for NSG. We list the other examples in \cref{fig:appendix_worst_voronoi_1}. They demonstrate that the query and the ground truths are in the different Voronoi cells in failed cases but are in the same cells in succeeded cases.

\begin{figure}[tb]
\vskip 0.2in
\centering
    \subfigure[NSG: $K=64$ (Failed)]{%
        \includegraphics[clip, width=0.33\columnwidth]{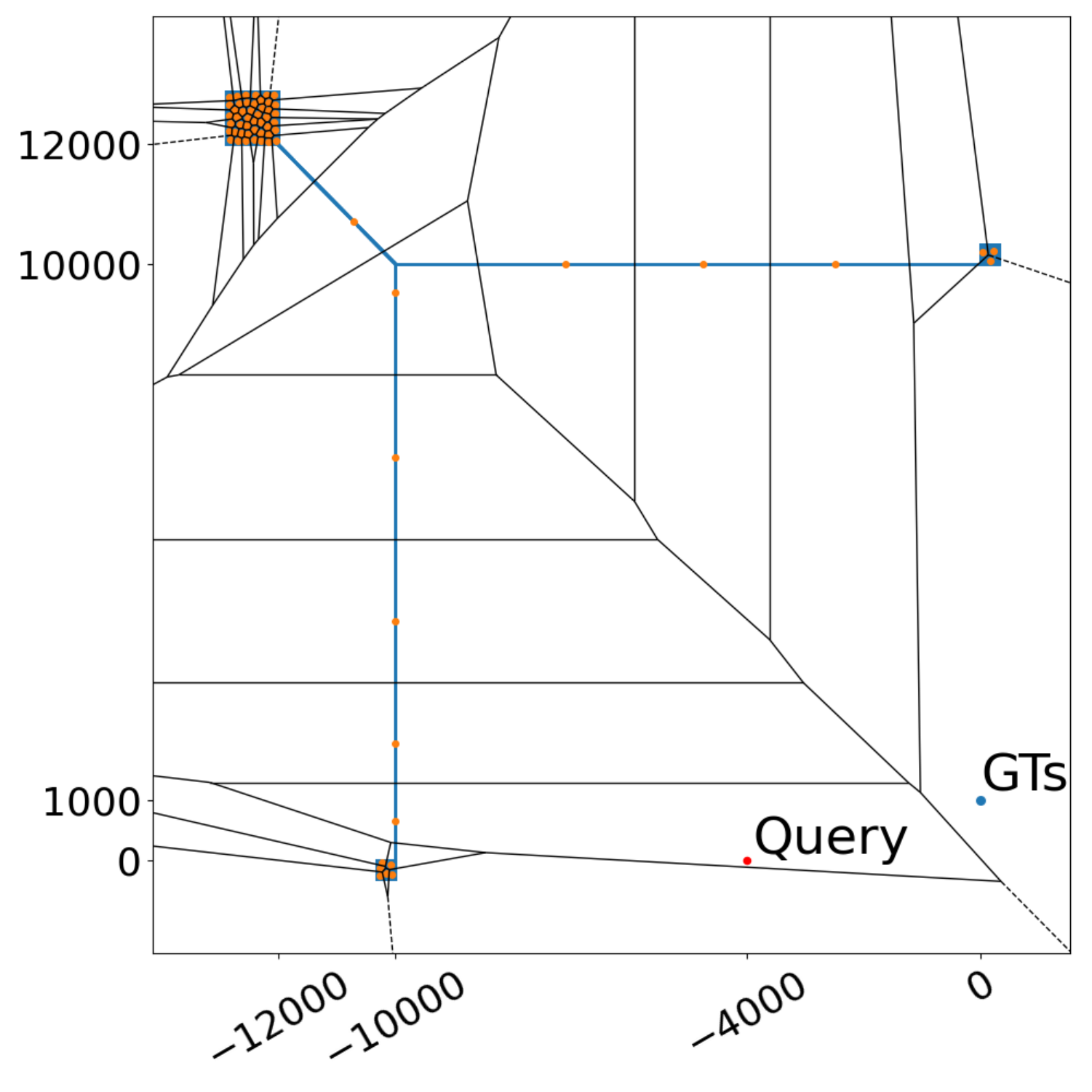}}%
    \subfigure[NSG: $K=256$ (Succeeded)]{%
        \includegraphics[clip, width=0.33\columnwidth]{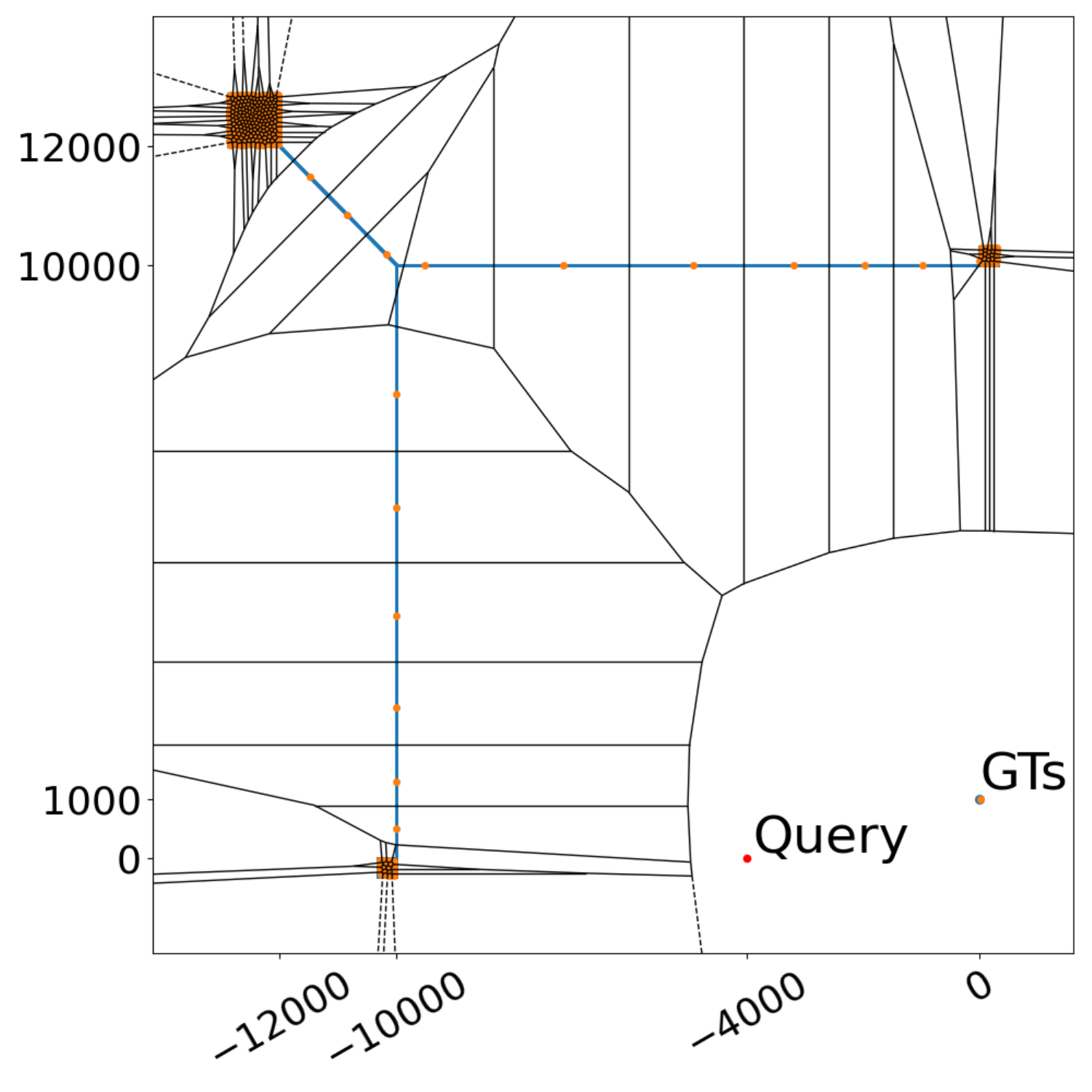}}%
    \subfigure[NSG: $K=512$ (Succeeded)]{%
        \includegraphics[clip, width=0.33\columnwidth]{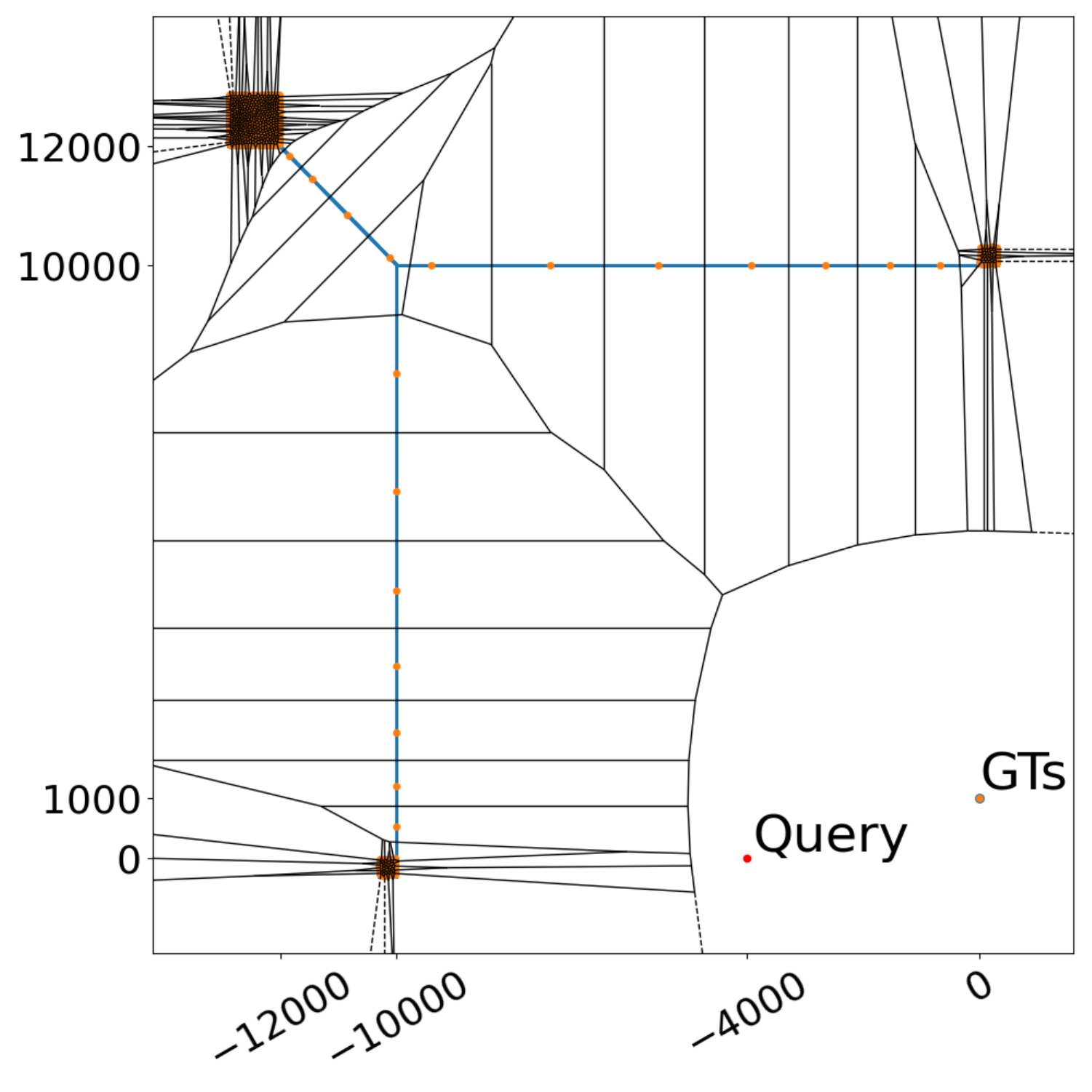}}%
    \\
    \subfigure[DiskANN: $K=32$ (Failed)]{%
        \includegraphics[clip, width=0.33\columnwidth]{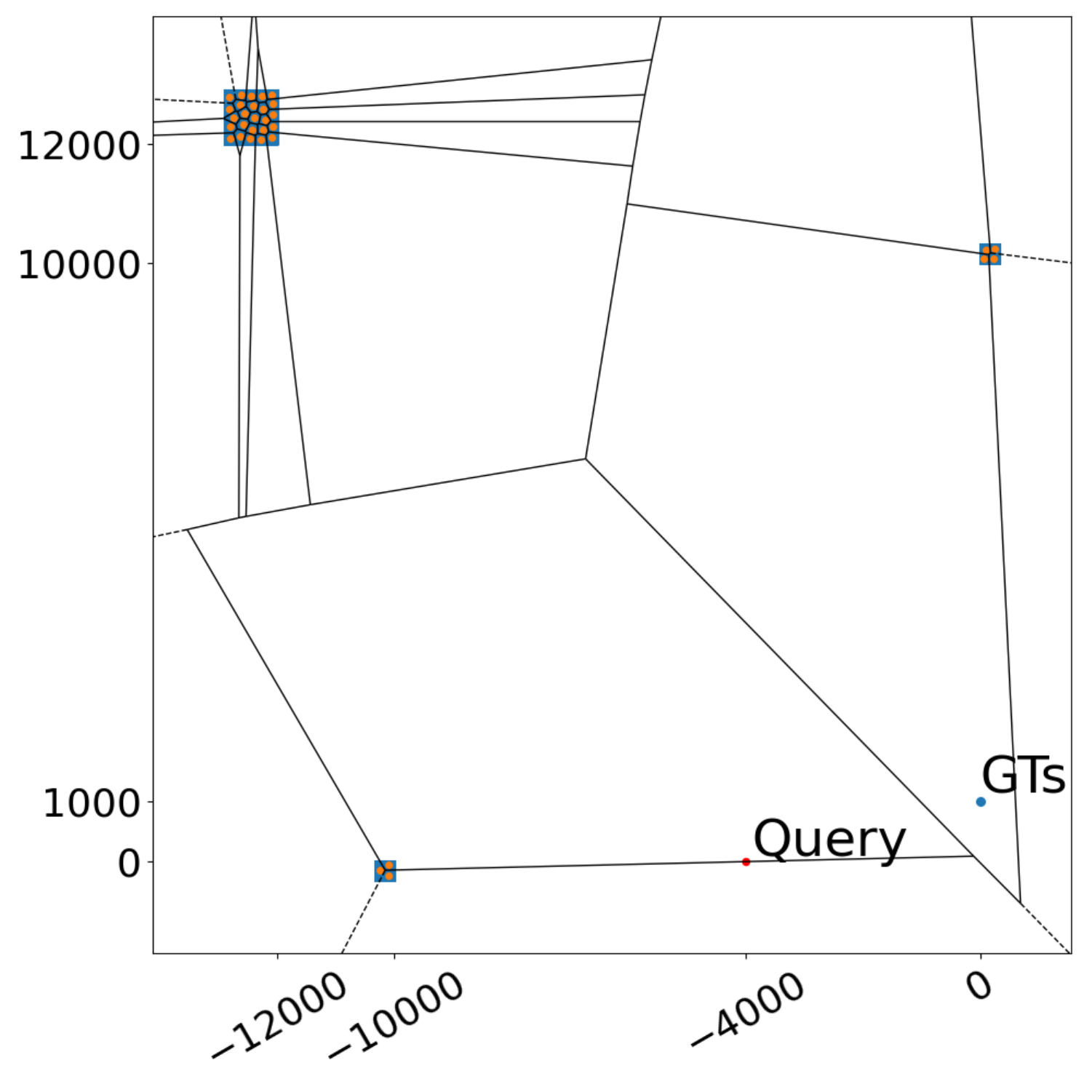}}%
    \subfigure[DiskANN: $K=128$ (Failed)]{%
        \includegraphics[clip, width=0.33\columnwidth]{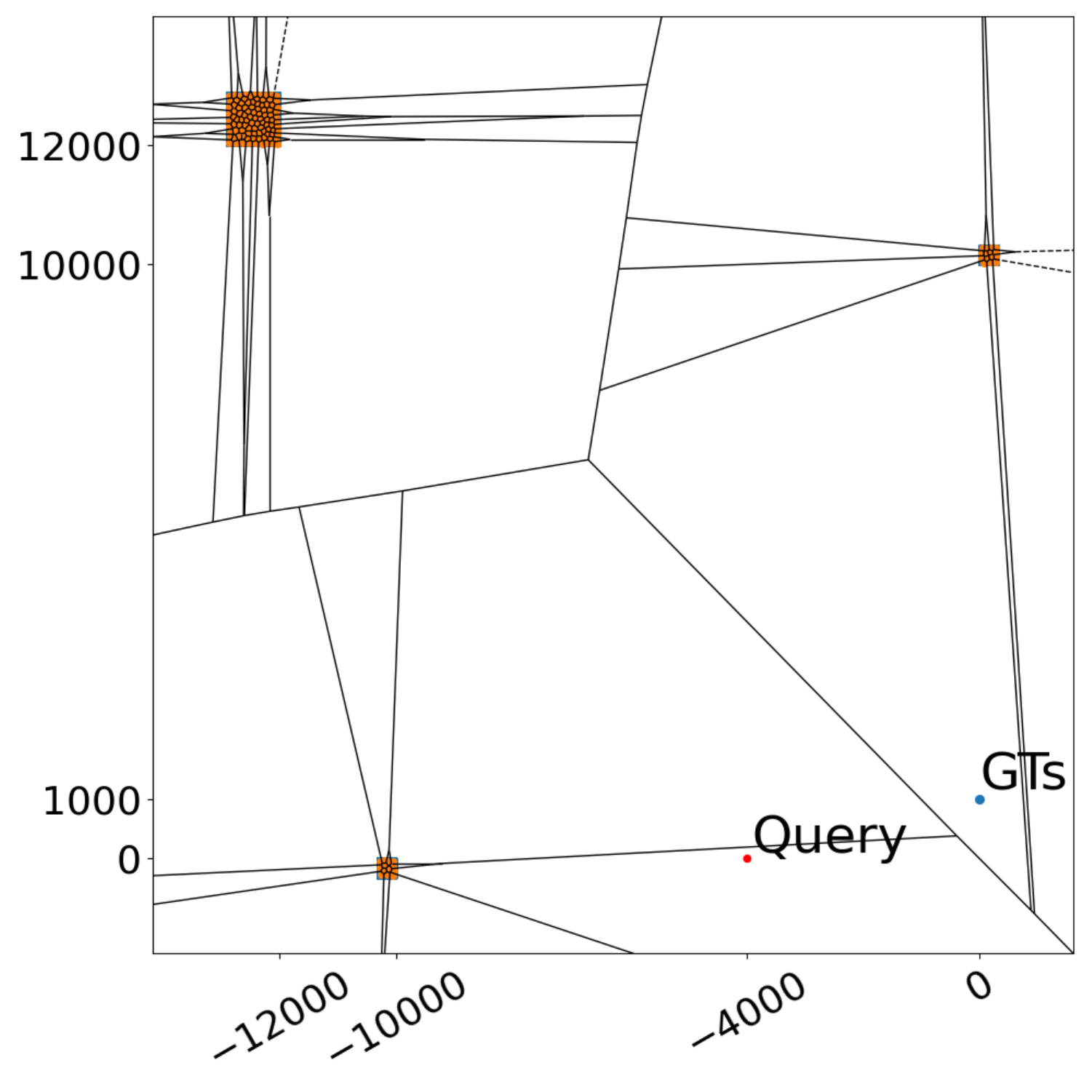}}%
    \subfigure[DiskANN: $K=256$ (Succeeded)]{%
        \includegraphics[clip, width=0.33\columnwidth]{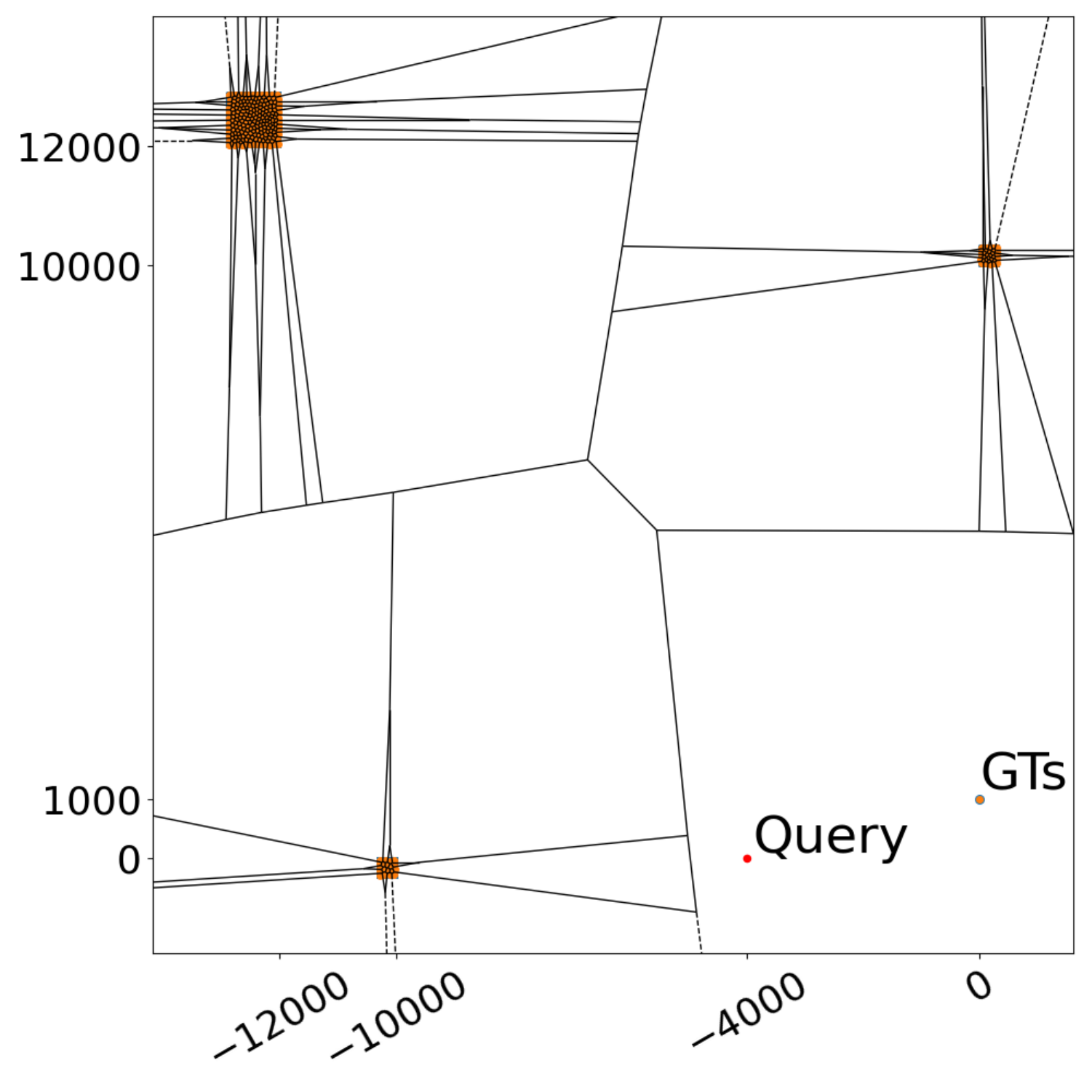}}%
    \caption{The Voronoi partition on the hard instance for NSG and DiskANN. (a) - (c) shows the results on NSG, and (d) - (f) shows the ones on DiskANN. For NSG, $K < 128$ leads to zero accuracy in search (failure), but $K\geq128$ leads to Recall@$10$ = 1.0 (success). For DiskANN, $K < 256$ leads to zero accuracy in search (failure), but $K\geq256$ leads to Recall@$10$ = 1.0 (success). Blue points show samples in the database $\mathcal{X}$. Orange points represent entry point candidates $\mathcal{D}$.}
    \label{fig:appendix_worst_voronoi_1}
\vskip -0.2in
\end{figure}



\end{document}